\newcommand{\T}{{\scriptscriptstyle\mathsf{T}}}
\renewcommand{\H}{{\scriptscriptstyle\mathsf{H}}}
\newsavebox{\foobox}
\definecolor{kugray5}{RGB}{224,224,224}
\newcommand\rsout{\bgroup\markoverwith
	{\textcolor{red}{\rule[0.5ex]{2pt}{0.8pt}}}\ULon}
\newcommand{\ALOOP}[1]{\ALC@it\algorithmicloop\ #1%
	\begin{ALC@loop}}
	\newcommand{\ENDALOOP}{\end{ALC@loop}\ALC@it\algorithmicendloop}
\renewcommand{\algorithmicrequire}{\textbf{Input:}}
\let\mybibitem\bibitem
\renewcommand{\bibitem}[1]{%
	\ifstrequal{#1}{nature}
	{\color{blue}\mybibitem{#1}}
	{\color{black}\mybibitem{#1}}%
}
\newtheorem{theorem}{\textbf{Theorem}}
\newtheorem{lemma}{\textbf{Lemma}}
\newtheorem{proof}{Proof}
\newcommand{\epr}{\hfill\(\Box\)}
\newcommand\nbthis{\addtocounter{equation}{1}\tag{\theequation}}
\newcommand{\norm}[1]{\left\lVert#1\right\rVert} 
\newcommand{\abs}[1]{\left|#1\right|} 
\newcommand{\tr}[1]{\text{trace}\left(#1\right)} 
\newcommand{\trshort}[1]{\text{trace}(#1)} 
\newcommand{\diag}[1]{\mathtt{diag}\left\{#1\right\}} 
\newcommand{\diagshort}[1]{\mathtt{diag}\{#1\}} 
\newcommand{\re}[1]{\mathfrak{R}{\left(#1\right)}}
\newcommand{\im}[1]{\mathfrak{I}{\left(#1\right)}}
\newcommand{\mean}[1]{\mathbb{E} \left\{#1\right\}}
\newcommand{\meanshort}[1]{\mathbb{E} \{#1\}}
\newcommand{\mR}{{\mathbf{R}}}
\newcommand{\mH}{{\mathbf{H}}} 
\newcommand{\mA}{{\mathbf{A}}}
\newcommand{\mS}{{\mathbf{S}}}
\newcommand{\mW}{{\mathbf{W}}}
\newcommand{\mI}{\textbf{\textbf{I}}}
\newcommand{\mT}{{\mathbf{T}}}
\newcommand{\mD}{{\mathbf{D}}}
\newcommand{\mX}{{\mathbf{X}}}
\newcommand{\mY}{{\mathbf{Y}}}
\newcommand{\mG}{{\mathbf{G}}}
\newcommand{\mF}{{\mathbf{F}}}
\newcommand{\mU}{{\mathbf{U}}}
\newcommand{\mZ}{{\mathbf{Z}}}
\newcommand{\mN}{{\mathbf{N}}}
\newcommand{\setC}{\mathbb{C}} 
\newcommand{\setR}{\mathbb{R}}
\newcommand{\vs}{{\mathbf{s}}}
\newcommand{\vx}{{\mathbf{x}}}
\newcommand{\vv}{{\mathbf{v}}}
\newcommand{\vn}{{\mathbf{n}}}
\newcommand{\vz}{{\mathbf{z}}} 
\newcommand{\vh}{{\mathbf{h}}}
\newcommand{\vb}{{\mathbf{b}}}
\newcommand{\vw}{{\mathbf{w}}}
\newcommand{\va}{{\mathbf{a}}}
\newcommand{\vt}{{\mathbf{t}}}
\newcommand{\vf}{\mathbf{f}}
\def\b0{{\pmb{0}}}
\newcommand{\Nr}{N_\mathtt{r}}
\newcommand{\Nt}{N_\mathtt{t}}
\newcommand{\Nrh}{N_\mathtt{rh}}
\newcommand{\Nrv}{N_\mathtt{rv}}
\newcommand{\Nth}{N_\mathtt{th}}
\newcommand{\Ntv}{N_\mathtt{tv}}
\newcommand{\ah}{\va_\mathtt{h}} 
\newcommand{\av}{\va_\mathtt{v}}
\newcommand{\sigmac}{\sigma_{\mathtt{c}}^2}
\newcommand{\sigmas}{\sigma_{\mathtt{s}}^2}
\newcommand{\noise}{\sigma^2}
\newcommand{\Ttt}{\tau_{\theta \theta}}
\newcommand{\Ttp}{\tau_{\theta \phi}}
\newcommand{\Tta}{\vt_{\theta \bar{\bm{\alpha}}}}
\newcommand{\Tpp}{\tau_{\phi \phi}}
\newcommand{\Tpa}{\vt_{\phi \bar{\bm{\alpha}}}}
\newcommand{\Taa}{\mT_{\bar{\bm{\alpha}}\bar{\bm{\alpha}}}}
\newcommand{\SEc}{\mathtt{SE}_{\mathtt{c}}} 
\newcommand{\SEth}{\mathtt{SE}_0} 
\newcommand{\SEck}{\mathtt{SE}_{\mathtt{c},k}} 
\newcommand{\whSEck}{\widehat{\mathtt{SE}}_{\mathtt{c},k}}
\newcommand{\wtSEck}{\widetilde{\mathtt{SE}}_{\mathtt{c},k}}
\newcommand{\reim}{\mathcal{X}}
\newcommand{\whreim}{\widehat{\mathcal{X}}}
\newcommand{\resym}{\mathfrak{R}}
\newcommand{\imsym}{\mathfrak{I}}
\newcommand{\slvar}[3]{\mathcal{#1}_{#2}^{#3}}
\newcommand{\Tpap}{\mathcal{T_{\phi \bar{\bm{\alpha}} \phi }}}
\newcommand{\TpapR}{\mathcal{T_{\phi \bar{\bm{\alpha}} \phi }^{\resym}}}
\newcommand{\TpapI}{\mathcal{T_{\phi \bar{\bm{\alpha}} \phi }^{\imsym}}}
\newcommand{\TpapX}{\mathcal{T_{\phi \bar{\bm{\alpha}} \phi }^{\reim}}}
\newcommand{\TpapwhX}{\mathcal{T_{\phi \bar{\bm{\alpha}} \phi }^{\whreim}}}
\newcommand{\Ldp}[2]{f_{p^2}^{\mathcal{L}} \big({#1},{#2}\big)}
\newcommand{\Udp}[2]{f_{p^2}^{\mathcal{U}} \big({#1},{#2}\big)}
\newcommand{\Ltp}[3]{f_{p^3}^{\mathcal{L}} \big({#1},{#2},{#3}\big)}
\newcommand{\Utp}[3]{f_{p^3}^{\mathcal{U}} \big({#1},{#2},{#3}\big)}
\newcommand{\fah}[2]{f_{\va_\mathtt{h}} \big({#1},{#2}\big)}
\newcommand{\fav}[2]{f_{\va_\mathtt{v}} \big({#1},{#2}\big)}
\begin{document}

	\title{Energy Efficiency for Massive MIMO Integrated Sensing and Communication Systems}
	\author{Huy~T.~Nguyen, \IEEEmembership{Member, IEEE}, Van-Dinh~Nguyen, \IEEEmembership{Senior Member, IEEE},
	 		Nhan~Thanh~Nguyen, \IEEEmembership{Member, IEEE},
                Nguyen Cong Luong,
                Vo-Nguyen~Quoc~Bao, \IEEEmembership{Senior Member, IEEE},
		 	Hien~Quoc~Ngo, \IEEEmembership{Fellow, IEEE},\\
                Dusit Niyato, \IEEEmembership{Fellow, IEEE}, 
                and 
                Symeon Chatzinotas, \IEEEmembership{Fellow, IEEE}
                
		\thanks{This research is supported by the project Sustainable Multifunctional Satellite Systems (SMS2) funded by Fonds National de la Recherche (FNR) under Contract C24/IS/18957132/SMS2, the Vietnam National Foundation for Science and Technology Development (NAFOSTED) under grant number 102.02-2023.43, the Research Council of Finland through 6G Flagship Program (grant no. 369116); project DIRECTION (grant no. 354901); project DYNAMICS (grant no. 367702), and project S6GRAN (grant no. 370561);  CHIST-ERA through the project PASSIONATE  (grant number 359817); the U.K. Research and Innovation Future Leaders Fellowships under Grant MR/X010635/1; a research grant from the Department for the Economy Northern Ireland under the US-Ireland R\&D Partnership Programme; Seatrium New Energy Laboratory, Singapore Ministry of Education (MOE) Tier 1 (RG87/22 and RG24/24), the NTU Centre for Computational Technologies in Finance (NTU-CCTF), and the RIE2025 Industry Alignment Fund - Industry Collaboration Projects (IAF-ICP) (Award I2301E0026), administered by A*STAR. Corresponding author: Van-Dinh Nguyen} 

\thanks{H. T. Nguyen is with Smart and Autonomous Systems Research Group, Faculty of Information Technology, School of Technology, Van Lang University, Ho Chi Minh City, 70000, Vietnam. (e-mail: huy.nt@vlu.edu.vn).} 
\thanks{V.-D. Nguyen is with the School of Computer Science and Statistics, Trinity College Dublin, Dublin 2, D02PN40, Ireland  (e-mail: dinh.nguyen@tcd.ie).}
\thanks{N. T. Nguyen is with Centre for Wireless Communications, University of Oulu, P.O.Box 4500, FI-90014, Finland (e-mail: nhan.nguyen@oulu.fi). }
\thanks{N. C. Luong is with the Phenikaa School of Computing, PHENIKAA University, Hanoi 12116, Vietnam (e-mail: luong.nguyencong@phenikaa-uni.edu.vn).}
\thanks{V.-N. Q. Bao is with Faculty of Information Technology, School of Technology, Van Lang University, Ho Chi Minh City, 70000, Vietnam. (e-mail: bao.vnq@vlu.edu.vn).}
\thanks{H. Q. Ngo is with the Centre for Wireless Innovation (CWI), Queen’s University Belfast, BT3 9DT Belfast, U.K. (e-mail: hien.ngo@qub.ac.uk).}
\thanks{Dusit Niyato is with the College of Computing and Data Science, Nanyang
Technological University, Singapore 639798 (e-mail: dniyato@ntu.edu.sg).}
\thanks{S. Chatzinotas is with the Interdisciplinary Centre for Security,
Reliability and Trust (SnT), University of Luxembourg, 1855 Luxembourg City, Luxembourg and with College of Electronics \& Information, Kyung Hee University, Yongin-si, 17104, Korea (e-mail:  symeon.chatzinotas@uni.lu).
}}
	\maketitle
	
\begin{abstract}
This paper explores the energy efficiency (EE) of integrated sensing and communication (ISAC) systems employing massive multiple-input multiple-output (mMIMO) techniques to leverage spatial beamforming gains for both communication and sensing. We focus on an mMIMO-ISAC system operating in an orthogonal frequency-division multiplexing setting with a uniform planar array, zero-forcing downlink transmission, and mono-static radar sensing to exploit multi-carrier channel diversity. By deriving closed-form expressions for the achievable communication rate and Cramér-Rao bounds (CRBs), we are able to determine the overall EE in closed-form. A power allocation problem is then formulated to maximize the system's EE by balancing communication and sensing efficiency while satisfying communication rate requirements and CRB constraints. Through a detailed analysis of CRB properties, we reformulate the problem into a more manageable form and leverage Dinkelbach's and successive convex approximation (SCA) techniques to develop an efficient iterative algorithm. A novel initialization strategy is also proposed to ensure high-quality feasible starting points for the iterative optimization process. Extensive simulations demonstrate the significant performance improvement of the proposed approach over baseline approaches. Results further reveal that as communication spectral efficiency rises, the influence of sensing EE on the overall system EE becomes more pronounced, even in sensing-dominated scenarios. Specifically, in the high $\omega$ regime of $2 \times 10^{-3}$, we observe a 16.7\% reduction in overall EE when spectral efficiency increases from $4$ to $8$ bps/Hz, despite the system being sensing-dominated.
\end{abstract}
	
\begin{IEEEkeywords}
Energy efficiency, integrated sensing and communication, massive multiple-input multiple-output, zero-forcing, orthogonal frequency-division multiplexing.
\end{IEEEkeywords}
\IEEEpeerreviewmaketitle

	\section{Introduction}
Future wireless systems, particularly in the sixth-generation mobile system (6G), demand unprecedented performance characterized by ultra-high data rates, extremely low latency and exceptional energy efficiency (EE)\cite{chowdhury20206g}. Beyond these requirements, 6G aims to integrate sensing and cognitive capabilities, leveraging massive multiple-input multiple-output (mMIMO) technology to enable simultaneous communication and radar functionalities \cite{fang2022joint}. This innovative approach, widely referred to as integrated sensing and communication (ISAC), is also known as dual-functional radar-communication or joint communication and sensing in the literature \cite{liu2022integrated, liu2022survey}.

Existing ISAC systems can be broadly categorized into three types based on their primary design objectives: Radar-centric, communication-centric, and joint design approaches~\cite{zhang2021overview}. Radar-centric methods embed digital messages into radar waveforms but often compromise communication performance. On the other hand, communication-centric approaches, which primarily use conventional communication signals, typically offer limited sensing capabilities. This work aims to strike a balance between communication and sensing energy efficiency through optimized transceiver design for mMIMO ISAC systems.

\subsection{Related Work}
Recent research has increasingly focused on optimizing communication and sensing performance in ISAC systems~\cite{nguyen2024performance, zhang2023isac, nguyen2023multiuser, nguyen2024joint, nguyen2021hierarchical, liu2020joint}. Sensing performance is typically evaluated using metrics such as beampattern, signal-to-cluster-plus-noise ratio (SCNR), and the Cramér-Rao bound (CRB). Several studies aim to minimize beampattern mismatch while maintaining reliable communication by enforcing a signal-to-interference-plus-noise ratio (SINR) threshold~\cite{nguyen2023multiuser, nguyen2024joint, johnston2022mimo}. On the other hand, beamforming designs for ISAC systems commonly adopt SCNR and SINR as performance metrics to evaluate sensing and communication capabilities, respectively~\cite{choi2024joint, zhang2023isac, wang2023optimizing}. While approaches based on beampattern or SCNR prioritize target coverage within radar transmitter main lobes, they frequently neglect the radar echo signal processing required for detection and estimation. By contrast, the CRB, which establishes a fundamental accuracy limit for parameter estimation, offers a more comprehensive evaluation of sensing performance. As a result, it has been adopted as a key metric in numerous studies~\cite{nguyen2024performance, song2023intelligent, ren2022fundamental, liu2021cramer}.       
	
The rising demand for EE has positioned it as a critical performance metric in modern wireless systems~\cite{nguyen2020nonsmooth, nguyen2017spectral}. In ISAC systems, the integration of sensing and communication functions presents unique challenges, requiring the simultaneous optimization of data transmission and environmental perception while minimizing energy consumption. Enhancing EE in these systems is vital for reducing operational costs and environmental impact, extending the lifespan of energy-constrained devices like battery-powered Internet of Things (IoT) sensors, and promoting sustainable wireless networks by lowering their carbon footprint~\cite{leong2024green, gandotra2017green}. 

The EE maximization has become a key design objective for ISAC systems, driving the development of innovative techniques for resource allocation, waveform design, and hardware architectures~\cite{zou2024energy, allu2024robust, wu2023energy, he2022energy}. In~\cite{he2022energy}, communication EE was optimized by constraining the beampattern design for sensing while maintaining the communication SINR threshold. The authors in \cite{zou2024energy} adopted a similar problem formulation but incorporated a CRB constraint to enhance sensing accuracy. Focusing on both downlink and uplink communication, \cite{allu2024robust} sought to maximize communication EE by imposing a radar rate constraint as the sensing performance metric. Meanwhile, \cite{wu2023energy} leveraged the diversity of MIMO systems to jointly minimize power consumption and optimize the selection of active and inactive transmitters, ensuring adherence to both minimum communication rate and maximum CRB constraints.

\subsection{Motivation and Main Contributions}
Existing EE designs for ISAC systems primarily focus on maximizing the EE of the communication subsystem while satisfying sensing performance constraints~\cite{allu2024robust, he2022energy, zou2024energy, hatami2025energy}. However, energy-efficient operations are essential for both subsystems, especially when they are integrated into a single platform like ISAC, where sensing functions often require significant power consumption~\cite{liu2018mu}. To the best of our knowledge, no existing work has simultaneously optimized EE for both subsystems. Moreover, despite the widespread adoption of orthogonal frequency-division multiplexing (OFDM) in practical systems~\cite{cheng2021hybrid, nguyen2023jointssp, nguyen2023multiuser}, existing EE research has largely overlooked its incorporation. Additionally, the EE design for mMIMO systems employing simple linear precoding techniques, such as zero-forcing (ZF), remains unexplored in the current literature.

In this work, we consider a monostatic ISAC system designed for point target sensing. The sensing module aims to estimate the target's parameters, i.e., azimuth and elevation angles, relative to the BS. CRB is leveraged as a fundamental lower bound on the achievable estimation accuracy for these parameters, acknowledging its direct relevance to the precision of target detection and tracking capabilities.
This work addresses these gaps by maximizing overall EE for joint communication and radar sensing in an mMIMO system. Specifically, we explore a downlink multiuser mMIMO-OFDM ISAC system featuring a large uniform planar array (UPA) transmitter with ZF precoding. ZF is selected for its ability to manage inter-user interference and its closed-form solution, which simplifies design and optimization. Meanwhile, OFDM offers flexible subcarrier allocation, facilitating efficient resource sharing between sensing and communication.
For the sensing functionality, we focus on a tracking scenario where the target is assumed to have been detected during an earlier search phase~\cite{nguyen2024performance, song2023intelligent}. Tracking demands sustained sensing accuracy and efficient resource allocation to maintain reliable performance without excessive energy consumption.
The key contributions of this work are summarized as follows:
\begin{itemize}
    \item We derive new closed-form expressions for the achievable communication rate and the CRB for the target's azimuth and elevation angles in mMIMO-OFDM systems with ZF beamforming. Existing CRB expressions are not directly applicable due to the unique considerations of the UPA antenna model, ZF beamforming, and the integrated ISAC architecture\footnote{Unlike the comprehensive model in \cite{mylonopoulos2024extended}, the CRB functions considered in this work do not account range and Doppler estimation. Incorporating these capabilities would introduce significant complexity and impose a substantial computational burden on the current framework.}.
    
    \item We formulate a novel EE maximization problem that jointly optimizes communication and sensing EE, subject to constraints on transmission power, communication rates, and CRBs. To this end, we first derive a closed-form expression for transmit power as a function of power allocation factors. The problem's objective function incorporates multiple performance metrics, including communication spectral efficiency (SE), sensing CRBs for the target's azimuth and elevation angles, and total power consumption. Each of these metrics is a complex, nonconvex function of the design variables, presenting significant challenges for power allocation.
    
    \item By leveraging Dinkelbach's and successive convex approximation (SCA) techniques, we develop a simple yet efficient iterative algorithm that is guaranteed to converge to at least a locally optimal solution satisfying the Karush-Kuhn-Tucker (KKT) conditions. This iterative algorithm also incorporates an efficient initialization strategy to ensure feasible starting points from the very first iterations.
    
    \item Numerical simulations demonstrate significant performance gains over baseline methods, highlighting that increasing the communication rate threshold reduces sensing energy efficiency, even when sensing remains the dominant factor in overall system performance.
\end{itemize}

\subsection{Paper Organization and Notation}
The rest of the paper is organized as follows. In Section \ref{sec_system_model}, we present the signaling models of ISAC systems and ZF ISAC beamforming. In Section \ref{sec_perf_analysis}, the closed-form expressions for the system communication rate and sensing CRB are derived, followed by the formulation of the overall EE. In Section \ref{sec_opt}, the EE maximization problem is formulated, followed by a detailed description of the proposed solution. Numerical results are subsequently presented and discussed in Section \ref{sec_sim}. Section \ref{sec_conclusion} draws the conclusion of the paper. 
	
\textit{Notations}: Scalars, vectors, and matrices are represented throughout the paper by lower-case, boldface lower-case, and boldface upper-case letters, respectively; $(\cdot)^*$, $(\cdot)^\T$, $(\cdot)^\H$, and $\tr{\cdot}$ denote the conjugate, transpose, the Hermitian transpose, and the trace operators, respectively; $\abs{\cdot}$ and $\norm{\cdot}$  respectively denote the modulus of a complex number and the Euclidean norm of a vector. The notation $\dot{\va}_o$ represents the derivative of $\va$ with respect to $o$, \textit{i.e.}, $\dot{\va}_o = \frac{\partial \va}{\partial o}$, $\mathcal{CN}(\mU[q], \sigma^2)$ denotes a complex normal distribution with mean $\mathbf{\mU[q]}$ and variance $\sigma^2$, and $\mean{\cdot}$ denotes the expected value of a random variable. 

\section{Signal Model}
\label{sec_system_model}
We consider a mono-static mMIMO-OFDM ISAC system, wherein a base station (BS) simultaneously transmits probing signals toward a target at a specified angle and data signals to $K$ single-antenna user equipments (UEs), as illustrated in Fig.~\ref{fig_system}. The BS employs $\Nt$ transmit and $\Nr$ receive antennas with $\Nt, \Nr \gg 1$, and $Q$ subcarriers. The $q$-th subcarrier frequency is given as $f_q = f_{\text{c}} + \frac{\text{BW}(q-Q)}{2Q}$, where $\text{BW}$ and $f_{\text{c}}$ represent the system bandwidth and center frequency, respectively. 

\begin{figure}[t]%
	\vspace{-0.5cm}\centering
	\includegraphics[scale=0.6]{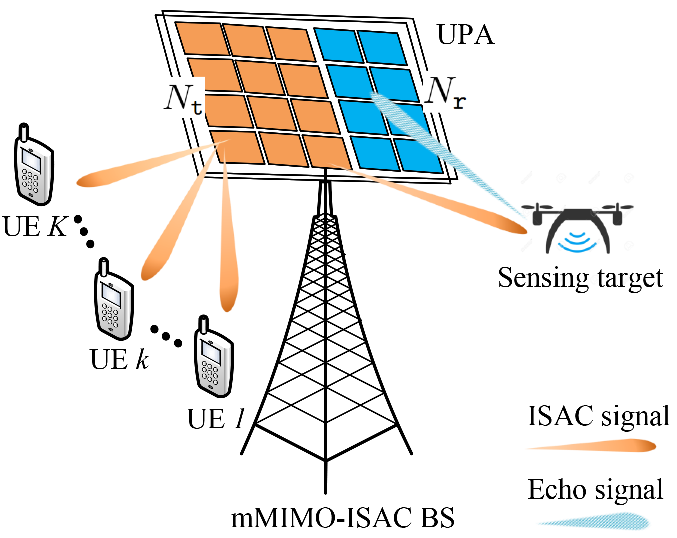}
	\caption[]{An mMIMO-ISAC system with a BS-equipped UPA transmitter ($\Nt$ transmit and $\Nr$ receive antennas).}%
	\label{fig_system}
\end{figure}

\subsection{Communication Model}
\subsubsection{Communication Subsystem}
Let $\vs_{\ell}[q] = [s_{1\ell}[q],  \ldots, s_{K\ell}[q]]^\T \in {\mathbb{C}}^{K \times 1}$ be the transmit symbol vector from the BS at the $\ell$-th time slot and $q$-th subcarrier, where $\mean{\vs_{\ell}[q] \vs_{\ell}^{\H}[q]}=\mI_K$. Then, the transmit signal matrix at the $q$-th subcarrier is given as $\mS[q] = [\vs_{1}[q], \ldots, \vs_{L}[q]] \in \setC^{K \times L}$, where $L$ is the length of the radar/communication frame. Assuming that $\{\vs_{1}[q], \ldots, \vs_{L}[q]\}$ are independent, with sufficiently large $L$, leading to
$\mS[q] \mS[q]^\H \approx L \mI_K$~\cite{liu2021cramer}. Building on a different paradigm than dedicated spatial stream designs (e.g., \cite{mylonopoulos2024extended}), our work proposes a shared stream design for both functionalities. This approach not only reduces the number of beamforming coefficients but also effectively avoids interference from the sensing stream to communication users.

We denote by $\vw_k[q]\in \mathbb{C}^{\Nt \times 1}$ and $\gamma_k[q]$  the precoding vector and power fraction allocated for the $k$-th UE at subcarrier $q$, respectively. Furthermore, we define $\mW[q] \triangleq [\vw_1[q],\ldots,\vw_K[q]] \in \mathbb{C}^{\Nt \times K}$ and {$\mD_{\gamma}[q] \triangleq \diagshort{\sqrt{\gamma_1[q]}, \ldots, \sqrt{\gamma_K[q]}} \in \setR^{K \times K}$}. On the other hand, let $\vv[q] \in \mathbb{C}^{\Nt \times 1}$ and {$\bar{\bm{\eta}}[q] = [\sqrt{ \eta_1[q]},\ldots,\sqrt{ \eta_K[q]}]^\T \in \setR^{K \times 1}$} be the precoding vector and power fraction allocated for sensing in each data stream, respectively. The downlink precoder for the transmission over the $q$-th subcarrier can be expressed as
\begin{align*}
	\mF[q] = \mW[q] \mD_{\gamma}[q] + \vv[q] \bar{\bm{\eta}}[q]^\T \in \mathbb{C}^{\Nt \times K}. \nbthis \label{eq_F}
\end{align*}
The dual-functional waveform design of the downlink precoder, as defined in \eqref{eq_F}, enhances the synergy between communication and sensing EE by jointly leveraging their shared spectral and hardware resources.
The precoders $\mW[q]$ and $\vv[q]$ are normalized such that $\mean{\tr{\mW[q] \mW^\H[q]}} = K$ and $\norm{\vv[q]}^2 = 1, \forall q$, and $\gamma_k[q] + \eta_k[q] = 1, \forall k, q$.
By \eqref{eq_F}, the dual-functional precoding vector for UE $k$, \textit{i.e.}, the $k$-th column of $\mF[q]$, is given as $\vf_k[q] = \sqrt{\gamma_k[q]} \vw_k[q] + \sqrt{\eta_k[q]} \vv[q]$.
Let $\bm{\mD}_{\xi}[q] \triangleq \diagshort{\sqrt{\xi_1[q]}, \ldots, \sqrt{\xi_K[q]}} \in \setR^{K \times K}$ be the power allocation matrix, where $\xi_k[q]$ is the power allocated to the $k$-th UE at the $q$-th subcarrier. Then, the transmitted signal vector at the $q$-th subcarrier is given as
\begin{align*}
	\vx_\ell[q] &= \mF[q] {\bm{\mD}_{\xi}[q]} \vs_\ell[q] = \sum\nolimits_{k=1}^{K} {\sqrt{\xi_k[q]}} \vf_k[q] s_{k \ell}[q]. \nbthis \label{eq_x_lq}
\end{align*}
Note that unlike the systems considered in \cite{huang2020majorcom, ma2021spatial, ma2021frac, liu2017robust, liu2022transmit, pritzker2022transmit, liu2020joint, pritzker2021transmit}, the transmit waveform $\vx_\ell[q]$ in \eqref{eq_x_lq} contains only communication symbols, and the radar waveform is ensured via beamforming design detailed in Section \ref{sec_opt}. This setting allows the communication to be free of interference from the radar subsystem \cite{cheng2021hybrid, elbir2021terahertz}.
The corresponding signal received at UE $k$ is
\begin{align*}
	y^{\mathtt{c}}_{k\ell}[q] &= {\sqrt{\xi_k[q]}} \vh_k^\H[q] \vf_k[q] s_{k\ell}[q] \\
	&\phantom{=}  + \vh_k^\H[q] \sum\nolimits_{j \neq k} {\sqrt{\xi_j[q]}} \vf_j[q] s_{j \ell}[q] + {n_{\mathtt{c}}}_{k\ell}[q], \nbthis \label{eq_y_kl}
\end{align*}
where ${n_{\mathtt{c}}}_{k\ell}[q]$ is the additive white Gaussian noise (AWGN) following the distribution $\mathcal{CN}(0, \sigmac)$; $\vh_k[q]\in \mathbb{C}^{\Nt \times 1}$ is the channel vector between the BS and UE $k$ and is modeled as $\vh_k[q] = \beta_k^{1/2} \bar{\vh}_k[q]$, where,  $\beta_k$ and $\bar{\vh}_k[q] \sim \mathcal{CN}(0, \mI_{\Nt})$ represent the large-scale and the small-scale Rayleigh fading channel coefficients, respectively \cite{mollen2016uplink}. The small-scale fading experienced by different communication links is statistically independent. In this paper, we assume that the BS has the perfect channel state information (CSI) of the communication channels,\footnote{Practically, CSI can be accurately and efficiently acquired through channel estimation at the receiver, then effectively shared with the transmitter via feedback techniques \cite{yu2016alternating}.} while the BS needs to estimate the target parameters based on the echo signals. Furthermore, this paper does not consider hardware impairments (e.g., power amplifier non-linearities, phase noise, I/Q imbalance) or estimation errors.

\subsection{Radar Subsystem}
Our radar model assumes a single point target in a free-space environment, neglecting the effects of clutter and multipath propagation. The sensing channel is considered to be a direct Line-of-Sight (LoS) path to the target, and the target's Radar Cross Section (RCS) is assumed constant and known.
Let $\mX[q] \triangleq [\vx_1[q], \ldots, \vx_L[q]] \in \setC^{\Nt \times L}$ be the overall dual-functional transmit waveform at the BS. From \eqref{eq_x_lq}, we have
\begin{align}
	\mX[q] = \mF[q]  {\bm{\mD}_{\xi}[q]} \mS[q].
	\label{eq_mX}
\end{align}
The received sensing echo signal at the BS via the $q$-th subcarrier is given as \cite{liu2021cramer, song2023intelligent}:
\begin{align*}
	\mY_{\mathtt{s}}[q] &= \alpha \mG(\theta, \phi, f_q) \mX[q] + \mN_{\mathtt{s}}[q] \in \setC^{\Nr \times L}, \nbthis \label{eq_radar_model}
\end{align*}
where $\alpha$ is the reflection coefficient, and $\mN_{\mathtt{s}}[q] \in \setC^{\Nr \times L}$ is an AWGN matrix with elements following distribution $\mathcal{CN}(0, \sigmas)$. Matrix $\mG(\theta, \phi, f_q) \in \setC^{\Nr \times \Nt}$ represents the two-way channel between the BS and the sensing target~\cite{johnston2022mimo, liu2020joint}:
\begin{align}
	\mG(\theta, \phi, f_q) &= \vb(\theta, \phi, f_q) \va^\H(\theta, \phi, f_q),\label{eq_A}
\end{align} 
where $\va(\theta, \phi, f_q) \in \setC^{\Nt \times 1}$ and $\vb(\theta, \phi, f_q) \in \setC^{\Nr \times 1}$ are the transmit and receive steering vectors, respectively. Here, we adopt the practical and general assumption that the BS is equipped with a UPA containing a large number of antennas.
Unlike a Uniform Linear Array (ULA), a UPA inherently enables two-dimensional (2D) sensing capabilities, allowing for simultaneous estimation of both azimuth and elevation angles. This 2D sensing is critical for accurately localizing targets in complex three-dimensional environments and for precise beamforming to both communication users and sensing targets. The deployment of UPA is practical to support advanced applications requiring comprehensive spatial awareness, such as autonomous driving, drone navigation, and smart environment monitoring.
Our analysis and designs are also applicable to ULA, which is a specific configuration of the UPA. Accordingly, $\theta$ and $\phi$ are the azimuth and elevation angles of the target relative to the BS with $\theta \in \left[-\pi, \pi\right]$, and $\phi \in \left[-\frac{\pi}{2}, \frac{\pi}{2}\right]$, and $f_q$ is the frequency for the $q$-th subcarrier. In the subsequent analysis, we omit the parameters $(\theta, \phi, f_q)$ in $\va$ and $\vb$ for ease of exposition. Assuming half-wavelength antenna spacing, $\va$ can be expressed as
\begin{align*}
	\va[q] = \ah[q] \otimes \av[q], \nbthis \label{eq_at}
\end{align*} 
where $\ah[q]\!\triangleq\!\Big[\fah{q}{1},\fah{q}{2}, \ldots, \fah{q}{\Nth}\Big]^\T$ and $\av[q]\triangleq\Big[\fav{q}{1},\fav{q}{2}, \ldots, \fav{q}{\Ntv}\Big]^\T$ are the array response vectors corresponding to the horizontal and vertical dimensions of the UPA, respectively, with $\fah{q}{n} \triangleq e^{j\pi \frac{f_q}{f_c} (n-1) \sin(\theta) \sin(\phi)}$ and $\fav{q}{n} \triangleq e^{j\pi \frac{f_q}{f_c} (n-1) \cos(\phi)}$. $\Nth$ and $\Ntv$ represent the number of antennas in the horizontal and vertical dimensions, respectively. The size of the UPA and the total number of antennas in the UPA are respectively given by $\Nth \times \Ntv$ and $\Nt = \Nth \Ntv$. The receive response vector $\vb$ is modeled similarly.
For the purpose of analysis, we assume perfect synchronization between the communication and sensing modules in terms of time and frequency. This implies that both functionalities share a common timing reference and operate on perfectly aligned frequency bands, which is essential for coherent processing and accurate parameter estimation. While practical ISAC deployments would necessitate robust synchronization protocols, this assumption allows us to focus on the fundamental trade-offs and EE optimization between communication and sensing without introducing additional synchronization-related complexities.

\subsection{ZF ISAC Beamforming}
From \eqref{eq_F}, \eqref{eq_mX}--\eqref{eq_A}, we rewrite \eqref{eq_radar_model} equivalently as

\begin{align*}
	\mY_{\mathtt{s}}[q]
	&= \alpha \vb[q] \va^\H[q] \vv[q] \bm{\eta}^\T[q] {\bm{\mD}_{\xi}[q]}  \mS[q] \\
	&\quad + \alpha \vb[q] \va^\H[q] \mW[q] \mD_{\gamma}[q] {\bm{\mD}_{\xi}[q]} \mS[q] + \vn_{\mathtt{s}}[q].
\end{align*}
It is true that for an arbitrary $\mW[q] \mD_{\gamma}[q]$, setting $\vv[q] = \va[q] {/\sqrt{\Nt}}$
maximizes the received echo signal power at the BS and satisfies $\norm{\vv[q]}^2 = 1$. To ensure the inter-user interference cancellation capability, the BS employs the ZF precoder for communication, as derived in the following lemma.
\begin{lemma} \label{lm_ZF_precoder}
	The perfect ZF cancellation for the communication precoder is expressed as:
	\begin{align*}
		\mW[q] = {\alpha_{\mathtt{ZF}}[q]} \mH[q] \left(\mH^\H[q] \mH[q]\right)^{-1} = {\alpha_{\mathtt{ZF}}[q]} \mH^{\dagger}[q],\, \forall q, \nbthis \label{eq_W_linear}
	\end{align*}
where ${\alpha_{\mathtt{ZF}}[q]} = \sqrt{\frac{K(\Nt-K)}{\tr{\mD_{\beta}^{-1}}}}$ is the normalization factor to ensure $\mean{\tr{\mW[q] \mW^\H[q]}} = K, \forall q$.
\end{lemma}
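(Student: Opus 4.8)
The plan is to characterize the communication precoder by the inter-user interference cancellation requirement, then fix the scalar $\alpha_{\mathtt{ZF}}[q]$ from the power normalization, invoking a standard inverse-Wishart moment for the expectation.

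First I would impose the ZF condition. Stacking the channels as $\mH[q] = [\vh_1[q],\ldots,\vh_K[q]]\in\setC^{\Nt\times K}$, the inter-user interference at every UE cancels exactly when $\vh_k^\H[q]\vw_j[q] = 0$ for all $j\neq k$; demanding in addition a common effective channel gain $\alpha_{\mathtt{ZF}}[q]$ for each user, this reads compactly as $\mH^\H[q]\mW[q] = \alpha_{\mathtt{ZF}}[q]\mI_K$. Since $\Nt > K$ and $\mH[q]$ has full column rank almost surely, the solution set is $\mW[q] = \alpha_{\mathtt{ZF}}[q]\mH[q](\mH^\H[q]\mH[q])^{-1} + \mW_\perp[q]$ with $\mH^\H[q]\mW_\perp[q] = 0$. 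As the two summands are orthogonal in the trace inner product, the minimum-norm (hence minimum transmit-power) ZF precoder takes $\mW_\perp[q] = 0$, giving the claimed form; a one-line substitution then verifies $\mH^\H[q]\mW[q] = \alpha_{\mathtt{ZF}}[q]\mI_K$.

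Next I would determine $\alpha_{\mathtt{ZF}}[q]$ from $\mean{\tr{\mW[q]\mW^\H[q]}} = K$. Using that $\mH^\H[q]\mH[q]$ is Hermitian together with the cyclic property of the trace gives $\tr{\mW[q]\mW^\H[q]} = \alpha_{\mathtt{ZF}}^2[q]\tr{(\mH^\H[q]\mH[q])^{-1}}$. To take the expectation I would factor out the large-scale fading: $\vh_k[q] = \beta_k^{1/2}\bar{\vh}_k[q]$ yields $\mH[q] = \bar{\mH}[q]\mD_\beta^{1/2}$, where $\bar{\mH}[q]$ has i.i.d. $\mathcal{CN}(0,1)$ entries and $\mD_\beta = \diag{\beta_1,\ldots,\beta_K}$, so that $\tr{(\mH^\H[q]\mH[q])^{-1}} = \tr{\mD_\beta^{-1}(\bar{\mH}^\H[q]\bar{\mH}[q])^{-1}}$ by another application of the cyclic property.

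The crux is the inverse Gram expectation: $\bar{\mH}^\H[q]\bar{\mH}[q]$ is a $K\times K$ central complex Wishart matrix with $\Nt$ degrees of freedom and identity scale, whose inverse has mean $\mean{(\bar{\mH}^\H[q]\bar{\mH}[q])^{-1}} = \frac{1}{\Nt-K}\mI_K$ for $\Nt > K$; this is the only nontrivial ingredient, the remainder being linear algebra. Substituting yields $\mean{\tr{(\mH^\H[q]\mH[q])^{-1}}} = \frac{1}{\Nt-K}\tr{\mD_\beta^{-1}}$, and enforcing $\alpha_{\mathtt{ZF}}^2[q]\,\tr{\mD_\beta^{-1}}/(\Nt-K) = K$ produces $\alpha_{\mathtt{ZF}}[q] = \sqrt{K(\Nt-K)/\tr{\mD_\beta^{-1}}}$, which is in fact frequency-independent since $\beta_k$ is frequency-flat. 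The main obstacle is getting the complex inverse-Wishart constant right (the complex case gives the denominator $\Nt-K$, not the real-case $\Nt-K-1$) and observing that finiteness of the mean requires $\Nt > K$, a condition the massive-MIMO regime $\Nt \gg K$ guarantees.
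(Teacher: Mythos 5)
Your proof is correct and follows essentially the same route as the paper's Appendix A: factor out the large-scale fading so that $\tr{\left(\mH^\H[q]\mH[q]\right)^{-1}} = \tr{\mD_{\beta}^{-1}\left(\bar{\mH}^\H[q]\bar{\mH}[q]\right)^{-1}}$, then invoke the central complex inverse-Wishart moment to get $\mean{\tr{\left(\mH^\H[q]\mH[q]\right)^{-1}}} = \tr{\mD_{\beta}^{-1}}/(\Nt-K)$ and solve for $\alpha_{\mathtt{ZF}}[q]$ --- the paper uses the trace form $\mean{\tr{(\mZ^\H[q]\mZ[q])^{-1}}} = K/(\Nt-K)$ together with exchangeability of the diagonal entries, while you use the equivalent matrix mean $\frac{1}{\Nt-K}\mI_K$, a purely cosmetic difference. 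Your opening derivation of the pseudo-inverse structure itself (ZF constraints plus minimum-norm selection) is a sound addition that the paper omits, since it treats that form as the standard ZF precoder.
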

\begin{proof}
See Appendix \ref{appd_proof_ZF}. \epr
\end{proof}

Combining \eqref{eq_W_linear} and $\vv[q] = \va[q] {/\sqrt{\Nt}}$, the overall dual-function precoder now can be given by
\begin{align*}
	\mF[q] &=  {\alpha_{\mathtt{ZF}}[q]} \mH^{\dagger}[q]  \mD_{\gamma}[q] + {\frac{1}{\sqrt{\Nt}}} \va[q] \bar{\bm{\eta}}^\T[q], \nbthis \label{eq_F_MRT} 
\end{align*}
and thus,
\begin{align*}
	\vf_k[q] &= {\alpha_{\mathtt{ZF}}[q]} \sqrt{\gamma_k[q]} \check{\vh}_k[q] + {\frac{1}{\sqrt{\Nt}}} \sqrt{\eta_k[q]} \va[q], \nbthis \label{eq_fkq_bf}
\end{align*}
where $\check{\vh}_k[q]$ in \eqref{eq_fkq_bf} is the $k$-th column of $\mH^{\dagger}[q]$.

\section{Energy Efficiency}
\label{sec_perf_analysis}
We now derive closed-form expressions for transmit power, communication SE, and sensing CRBs, which are subsequently used to formulate the EE.

\subsection{Power Consumption Model}
Let us begin by formulating the BS's transmit power for both communication and sensing over all the subcarriers. 
\begin{lemma} \label{lemma_power}
	The transmit power at the BS is given by
	\begin{align*}
		{P_{\mathtt{TX}}(\bm{\Omega}) = \sum\nolimits_{q=1}^Q \bm{\xi}^\T[q] \left( \bar{\mD}_{\beta} \bm{\gamma}[q] + \bm{\eta}[q] \right),} \nbthis \label{eq_Ptx}
	\end{align*}
	where {$\bm{\Omega} \triangleq \{ \xi_k[q], \gamma_k[q], \eta_k[q] \}_{\forall k, q}$ encompasses all the decision variables}, and $\bar{\mD}_{\beta} \triangleq \frac{K}{\tr{\mD_{\beta}^{-1}}} \mD_{\beta}^{-1}$.
\end{lemma}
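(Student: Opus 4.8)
The plan is to obtain \eqref{eq_Ptx} by computing the \emph{average} radiated energy of the dual-functional waveform $\mX[q]$ in \eqref{eq_mX}, summing over the $Q$ subcarriers, and then averaging out the symbol and channel randomness so that only the decision variables in $\bm{\Omega}$ survive. First I would write the total transmit power as $P_{\mathtt{TX}} = \sum_{q=1}^{Q} \frac{1}{L}\mean{\tr{\mX[q]\mX^\H[q]}}$ and substitute $\mX[q] = \mF[q]\bm{\mD}_{\xi}[q]\mS[q]$. Invoking the large-$L$ approximation $\mS[q]\mS^\H[q]\approx L\mI_K$ together with the diagonal structure of $\bm{\mD}_{\xi}[q]$, this collapses to $P_{\mathtt{TX}} = \sum_{q=1}^{Q}\sum_{k=1}^{K}\xi_k[q]\,\mean{\norm{\vf_k[q]}^2}$, so the whole task reduces to evaluating the expected squared norm of each dual-functional precoding column.

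Next I would insert the closed form of $\vf_k[q]$ from \eqref{eq_fkq_bf} and expand $\norm{\vf_k[q]}^2$ into three pieces: a ZF term $\alpha_{\mathtt{ZF}}^2[q]\gamma_k[q]\norm{\check{\vh}_k[q]}^2$, a sensing term $\frac{1}{\Nt}\eta_k[q]\norm{\va[q]}^2$, and a cross term proportional to $\Re\{\check{\vh}_k^\H[q]\va[q]\}$. The sensing term is immediate: the UPA steering vector $\va[q]$ in \eqref{eq_at} has unit-modulus entries, so $\norm{\va[q]}^2 = \Nt$ and this piece equals $\eta_k[q]$. The cross term vanishes in expectation: $\check{\vh}_k[q]$ is the $k$-th column of $\mH^{\dagger}[q]$, and circular symmetry of the Gaussian channel (replacing $\mH[q]$ by $e^{j\psi}\mH[q]$ leaves its law unchanged while scaling $\mH^{\dagger}[q]$ by $e^{j\psi}$) forces $\mean{\mH^{\dagger}[q]}=\mathbf{0}$, hence $\mean{\check{\vh}_k^\H[q]}\va[q]=0$.

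The core computation is the ZF term. Writing $\mH[q]=\bar{\mH}[q]\mD_{\beta}^{1/2}$ with $\bar{\mH}[q]$ having i.i.d.\ $\mathcal{CN}(0,1)$ entries, I would use $\norm{\check{\vh}_k[q]}^2 = \big[(\mH^\H[q]\mH[q])^{-1}\big]_{kk} = \beta_k^{-1}\big[(\bar{\mH}^\H[q]\bar{\mH}[q])^{-1}\big]_{kk}$. Since $\bar{\mH}^\H[q]\bar{\mH}[q]$ is a central complex Wishart matrix with $\Nt$ degrees of freedom, the inverse-Wishart mean gives $\mean{(\bar{\mH}^\H[q]\bar{\mH}[q])^{-1}}=\frac{1}{\Nt-K}\mI_K$, so that $\mean{\norm{\check{\vh}_k[q]}^2}=\frac{1}{\beta_k(\Nt-K)}$. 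Multiplying by $\alpha_{\mathtt{ZF}}^2[q]=\frac{K(\Nt-K)}{\tr{\mD_{\beta}^{-1}}}$ from Lemma~\ref{lm_ZF_precoder} cancels the $(\Nt-K)$ factor and yields exactly $\frac{K}{\beta_k\tr{\mD_{\beta}^{-1}}}=[\bar{\mD}_{\beta}]_{kk}$, the $k$-th diagonal entry of $\bar{\mD}_{\beta}$.

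Finally I would assemble the three contributions to obtain $\mean{\norm{\vf_k[q]}^2}=[\bar{\mD}_{\beta}]_{kk}\gamma_k[q]+\eta_k[q]$, and recognize the per-subcarrier sum $\sum_{k=1}^{K}\xi_k[q]\big([\bar{\mD}_{\beta}]_{kk}\gamma_k[q]+\eta_k[q]\big)$ as the quadratic form $\bm{\xi}^\T[q]\big(\bar{\mD}_{\beta}\bm{\gamma}[q]+\bm{\eta}[q]\big)$, using that $\bar{\mD}_{\beta}$ is diagonal. Summing over $q$ gives \eqref{eq_Ptx}. I expect the main obstacle to be the ZF term: one must cleanly separate the large-scale fading $\mD_{\beta}$ from the standardized Wishart factor, invoke the inverse-Wishart mean (valid only for $\Nt>K$), and carefully justify that the cross term averages to zero rather than silently discarding it.
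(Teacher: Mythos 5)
Your proposal is correct and follows essentially the same route as the paper's proof: both expand the dual-functional precoder $\mF[q]$ into its ZF and sensing parts, use $\norm{\va[q]}^2 = \Nt$ for the sensing contribution, and evaluate the ZF contribution via the inverse-Wishart mean $\mean{[(\bar{\mH}^\H[q]\bar{\mH}[q])^{-1}]_{kk}} = \frac{1}{\Nt-K}$ (the same identity the paper establishes in its Appendix A and reuses here), so that $\alpha_{\mathtt{ZF}}^2[q]$ cancels the $(\Nt-K)$ factor and produces $\bar{\mD}_{\beta}$. Your column-wise decomposition $\sum_k \xi_k[q]\mean{\norm{\vf_k[q]}^2}$ versus the paper's matrix-trace form is only a cosmetic difference, and your explicit circular-symmetry argument for why the ZF--sensing cross term vanishes in expectation is a welcome addition, since the paper drops that term without comment.
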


\begin{proof}
	See Appendix \ref{appd_proof_power}.\epr 
\end{proof}

Taking the static and dynamic power components\footnote{Given that dynamic circuit energy consumption is independent of amplifier limits, maximizing EE under a minimum overall throughput requirement is a reasonable approach to achieve high EE along with a desirable SE, and guaranteed QoS for each UE.} into consideration, the total power consumption is computed as
\begin{align*}
	P_{\mathtt{tot}}(\bm{\Omega}) = \varrho^{-1} P_{\mathtt{TX}}(\bm{\Omega}) + P_0 + \epsilon R(\bm{\Omega}), \nbthis \label{eq_total_power}
\end{align*}
where $\varrho$ is the power amplifier efficiency, $P_0$ is the total power consumed by the circuitries in the RF chains, power supply and cooling systems, etc. $\epsilon R(\bm{\Omega})$ captures the traffic-dependent power consumption, where $\epsilon>0$ is the dynamic power consumption per unit data rate, as detailed in \cite{he2022energy}. The system sum rate, denoted as $\SEc$, and EE are derived next.

\subsection{Communication EE}
We denote by $\SEc (\bm{\Omega}) \triangleq \sum_{k=1}^K \SEck (\bm{\Omega})$ the communication SE, where $\SEck (\bm{\Omega})$ is the SE of UE $k$. In \eqref{eq_y_kl}, assume that symbol $s_{k\ell}[q]$ is intended for UE $k$ via subcarrier $q$ at time slot $\ell$. $\SEck (\bm{\Omega})$ is given by
\begin{align*}
	\!\!\SEck (\bm{\Omega}) \!\!=\!\!  \sum_{q=1}^Q\! \log_2 \!\left(\! 1\! +\! \frac{{\xi_k[q]} \abs{\vh_k^\H[q] \vf_k[q]}^2}{\sum_{j \neq k} {\xi_j[q]}  \abs{ \vh_k^\H[q] \vf_{j}[q]}^2 \!+\! \sigmac}\right). \nbthis \label{eq_rate0}
\end{align*}
The EE of the communication subsystem is thus defined as
\begin{align}\label{def_EEc}
    \mathtt{EE}_{\mathtt{c}}(\bm{\Omega}) = \frac{\SEc (\bm{\Omega})}{P_{\mathtt{tot}}(\bm{\Omega})}\ [\text{bits/J/Hz}].
\end{align}
We recall that $\vf_k[q]$ is defined in \eqref{eq_fkq_bf}. With ZF precoding, it holds that $\vh_k^\H[q] \check{\vh}_i[q] = 1$ when $i = k$, and $\vh_k^\H[q] \check{\vh}_i[q] = 0$ otherwise. Thus, we obtain
\begin{align*}
		\!\!\!\!\!\abs{\vh_k^\H {\vf}_i[q]}^2 
		\!\!=\!\! \begin{cases}
			\begin{array}{ll}
				\!\!\!\!\! {\alpha_{\mathtt{ZF}}^2[q]} \gamma_i[q] \!+\! {\frac{1}{\Nt}} \eta_i[q] \abs{\vh_k^\H [q] \va[q]}^2 \\
				\!\!\!+ 2  {\frac{{\alpha_{\mathtt{ZF}}[q]}}{\sqrt{\Nt}}} \sqrt{\gamma_i[q] \eta_i[q]} \re{\vh_k^\T [q] \va^*[q]},\!\!\! &\!\!\!\text{$k \!=\! i$}\\
				\!\!\!\!\!{\frac{1}{\Nt}} \eta_i[q] \abs{\vh_k^\H [q] \va[q]}^2,\!\!\!\!\!\! &\!\!\!\text{$k \!\neq\! i$}.
			\end{array}
		\end{cases}
		\nbthis \label{eq_hkfmrti}
\end{align*}
As a result, $\SEc (\bm{\Omega})$ is given by \eqref{eq_rate1}.
\begin{figure*}
	\begin{align*}
		\SEc (\bm{\Omega}) = \sum_{k=1}^K \sum_{q=1}^Q \log_2 \left( 1 + \frac{{\xi_k[q]} \left( {\alpha_{\mathtt{ZF}}^2[q]} \gamma_k[q] + {\frac{1}{\Nt}} \eta_k[q] \abs{\vh_k^\H [q] \va[q]}^2 + 2  {\frac{ {\alpha_{\mathtt{ZF}}[q]}}{\sqrt{\Nt}}} \sqrt{\gamma_k[q] \eta_k[q]} \re{\vh_k^\T [q] \va^*[q]}\right)}{ {\frac{1}{\Nt}}  \sum_{j \neq k} {\xi_j[q]} \eta_j[q] \abs{\vh_k^\H [q] \va[q]}^2  + \sigmac}\right). \nbthis \label{eq_rate1}
	\end{align*}
	\hrule
\end{figure*}

\subsection{Sensing EE}
The definition of sensing EE remains an open question in the literature. In \cite{zou2024energy}, it is defined as the ratio of the inverse CRB to the total power consumption, considering only the CRB for the azimuth angle. In this work, we extend this definition by incorporating the sum of the inverse CRBs for both azimuth and elevation angles in the numerator. Consequently, the sensing EE is formulated as
\begin{align*}
	{\mathtt{EE}_{\mathtt{s}}(\bm{\Omega}) = \frac{ \mathtt{CRB}^{-1}_{\theta}(\bm{\Omega}) + \mathtt{CRB}^{-1}_{\phi}(\bm{\Omega}) }{P_{\mathtt{tot}}(\bm{\Omega})}}\ \text{[1/rad$^2$/W]}, \nbthis \label{eq_EE}
\end{align*}	
where $\mathtt{CRB}_{\theta}(\bm{\Omega})$ and $ \mathtt{CRB}_{\phi}(\bm{\Omega})$ are the CRBs for $(\theta, \phi)$, respectively, obtained in Theorem \ref{theo_CRB}. When using the sensing EE defined in \eqref{eq_EE} as the design objective for the sensing subsystem, it optimizes sensing performance while minimizing the total energy consumption, thereby improving the overall system's EE. 
\begin{theorem}
	\label{theo_CRB}
    The closed-form expressions of CRBs for $\theta$ and $\phi$ are derived as follows:
		\begin{align*}
			\mathtt{CRB}_{\theta}(\bm{\Omega}) \!&=\! \Big(\Ttt(\bm{\Omega}) - \frac{\Ttp^2(\bm{\Omega})}{\Tpp(\bm{\Omega}) - \Tpap(\bm{\Omega})}\!\Big)^{-1}\!\!\!\!\!,\ \nbthis \label{eq_CRB_theta_result}\\
			\mathtt{CRB}_{\phi}(\bm{\Omega}) \!&=\! \Big(\Tpp(\bm{\Omega}) \!-\! \Tpap(\bm{\Omega}) \!-\! \frac{\Ttp^2(\bm{\Omega})}{\Ttt(\bm{\Omega})}\Big)^{-1}\!\!\!\!\!\!, \nbthis \label{eq_CRB_phi_result}
		\end{align*}	
where $\Tpap(\bm{\Omega}) \triangleq \Tpa(\bm{\Omega}) \Taa^{-1}(\bm{\Omega}) \Tpa^\T(\bm{\Omega})$, and $\Ttt, \Ttp, \Tta, \Tpp, \Tpa$, and $\Taa$ are given as 
	\begin{align*}
			&\Ttt(\bm{\Omega}) = \bar{\kappa} \abs{\alpha}^2 \sum_{q=1}^Q \bm{\xi}^\T[q] \left( \hat{\mD}_{\beta}^{\theta \theta}[q] \bm{\gamma}[q]  + \bm{\eta}[q] c_{\theta \theta}[q] \right) 
			, \nbthis \label{eq_result_Ttt} \\
			&\Ttp(\bm{\Omega}) = \bar{\kappa} \abs{\alpha}^2 \sum_{q=1}^Q \bm{\xi}^\T[q] \left( \hat{\mD}_{\beta}^{\theta \phi}[q] \bm{\gamma}[q] +  \bm{\eta}[q] c_{\theta \phi}[q] \right), \nbthis \label{eq_result_Ttp}\\
			&\Tta(\bm{\Omega}) \!=\! \bar{\kappa} \re{\!\!\alpha^*\!\! \sum_{q=1}^Q\! \bm{\xi}^\T[q] \!\!  \left( \hat{\mD}_{\beta}^{\theta}[q] \bm{\gamma}[q]\!\! +\!\! \bm{\eta}[q] c_{\theta}[q] \right) \!\! [1,j]\!\!} , \nbthis \label{eq_result_Tta}\\
			&\Tpp(\bm{\Omega}) = \bar{\kappa} \abs{\alpha}^2 \sum_{q=1}^Q  \bm{\xi}^\T[q] \left(  \hat{\mD}_{\beta}^{\phi \phi}[q] \bm{\gamma}[q] + \bm{\eta}[q] c_{\phi \phi}[q] \right), \nbthis \label{eq_result_Tpp} \\
			&\Tpa(\bm{\Omega}) \!=\! \bar{\kappa} \re{\!\!\alpha^*\!\! \sum_{q=1}^Q\! \bm{\xi}^\T[q] \!\left(\! \hat{\mD}_{\beta}^{\phi}[q] \bm{\gamma}[q] \!\!+\!\! \bm{\eta}[q] c_{\phi}[q]\! \right)\! [1,j]\!\!}, \nbthis \label{eq_result_Tpa}\\
			&\Taa(\bm{\Omega}) = \bar{\kappa} \sum_{q=1}^Q \bm{\xi}^\T[q] \left(  \hat{\mD}_{\beta}[q] \bm{\gamma}[q] + \bm{\eta}[q] c_{2}[q] \right) \mI_2, \nbthis \label{eq_result_Taa}
		\end{align*}
		respectively. Herein, $x, y \in \{\theta, \phi\}$, $\bar{\kappa} \triangleq \frac{\kappa}{ \Nt}$ and $\hat{\mD}_{\beta}^{xy}[q] \triangleq \tr{\dot{\mG}^\H_{x}[q] \dot{\mG}_{y}[q]} \bar{\mD}_{\beta}$, $\hat{\mD}_{\beta}^{x}[q] \triangleq \tr{\dot{\mG}^\H_{x}[q] \mG[q]} \bar{\mD}_{\beta}$, $\hat{\mD}_{\beta}[q] \triangleq \tr{\mG^\H[q] \mG[q]} \bar{\mD}_{\beta}$, $c_{xy}[q] \triangleq \va^\H[q] \dot{\mG}^\H_{x}[q] \dot{\mG}_{y}[q] \va[q]$, $c_{x}[q] \triangleq \va^\H[q] \dot{\mG}^\H_{x}[q] \mG[q] \va[q]$, $c[q] \triangleq \va^\H[q] \mG^\H[q] \mG[q] \va[q]$ are constants with respect to the variable set $\bm{\Omega}$. 
\end{theorem}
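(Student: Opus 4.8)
The plan is to read the echo model in \eqref{eq_radar_model} as a complex Gaussian observation whose mean $\alpha\mG[q]\mX[q]$ carries the unknowns through the parameter vector $\bm{\zeta}\triangleq[\theta,\phi,\Re\{\alpha\},\Im\{\alpha\}]^\T$, with noise covariance $\sigmas\mI$ independent of $\bm{\zeta}$, and to obtain the CRBs by inverting the associated Fisher information matrix (FIM). First I would apply the Slepian--Bangs formula: each FIM entry is $\tfrac{2}{\sigmas}\sum_{q}\Re\{\,\cdot\,\}$ applied to a trace that pairs the $\bm{\zeta}$-derivatives of $\alpha\mG[q]\mX[q]$. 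Differentiating in $\theta$ and $\phi$ brings down $\dot{\mG}_{\theta}[q]$ and $\dot{\mG}_{\phi}[q]$, whereas differentiating in $\Re\{\alpha\}$ and $\Im\{\alpha\}$ brings down $\mG[q]$ and $j\mG[q]$. This accounts for every structure in the statement: the $\dot{\mG}$-versus-$\dot{\mG}$ forms in $\Ttt,\Ttp,\Tpp$; the $\dot{\mG}$-versus-$\mG$ forms in $\Tta,\Tpa$, where the row $[1,j]$ merely packs the two real-parameter derivatives of $\alpha$; and the $\mG^\H\mG$ form with the $\mI_2$ factor in $\Taa$, the $\mI_2$ reflecting that $\Re\{\alpha\}$ and $\Im\{\alpha\}$ carry equal, mutually decoupled information.

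The core of the derivation is to turn each trace into a closed form in $\bm{\Omega}$. Using $\mS[q]\mS^\H[q]\approx L\mI_K$, every entry reduces to a trace of the relevant $\mG$-product against the transmit covariance $\mX[q]\mX^\H[q]\approx L\,\mF[q]\bm{\mD}_{\xi}^2[q]\mF^\H[q]$. I would then insert the ZF-plus-sensing precoder \eqref{eq_F_MRT}, $\mF[q]=\alpha_{\mathtt{ZF}}[q]\mH^{\dagger}[q]\mD_{\gamma}[q]+\tfrac{1}{\sqrt{\Nt}}\va[q]\bar{\bm{\eta}}^\T[q]$, and expand $\mF\bm{\mD}_{\xi}^2\mF^\H$ into a communication block, a sensing block, and two cross blocks. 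Taking the expectation over the Rayleigh channels, the cross blocks vanish exactly because circular symmetry forces $\mean{\check{\vh}_k[q]}=\mathbf{0}$; the sensing block is deterministic and produces the quadratic forms $c_{xy},c_{x},c$ in $\va[q]$; and the communication block, after using the isotropy of $\check{\vh}_k[q]$ together with the inverse-Wishart identity $\mean{(\mH^\H[q]\mH[q])^{-1}}=\tfrac{1}{\Nt-K}\mD_{\beta}^{-1}$, collapses to trace-weighted sums. The per-user weights then combine with $\alpha_{\mathtt{ZF}}^2[q]=\tfrac{K(\Nt-K)}{\tr{\mD_{\beta}^{-1}}}$ so that the residual large-scale scaling is precisely $\bar{\mD}_{\beta}$, giving $\hat{\mD}_{\beta}^{xy},\hat{\mD}_{\beta}^{x},\hat{\mD}_{\beta}$ and the common prefactor $\bar{\kappa}=\kappa/\Nt$ that absorbs the $2L/\sigmas$ constants. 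Assembling the communication and sensing parts reproduces \eqref{eq_result_Ttt}--\eqref{eq_result_Taa}.

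With the averaged FIM written as a $4\times4$ matrix partitioned into the angle block $(\theta,\phi)$ and the nuisance block $\bar{\bm{\alpha}}$, I would extract the CRBs by eliminating $\bar{\bm{\alpha}}$ through its Schur complement. Since $\Taa(\bm{\Omega})$ is a positive scalar multiple of $\mI_2$, the correction to the $(\theta,\phi)$ block is the rank-one matrix obtained by sandwiching $\Taa^{-1}(\bm{\Omega})$ between the stacked coupling vectors $\Tta$ and $\Tpa$. The structure of the steering-vector derivatives makes this correction concentrate on the $\phi$-diagonal, reducing it to the single term $\Tpap=\Tpa\Taa^{-1}\Tpa^\T$ while leaving $\Ttt$ and $\Ttp$ intact; the effective angle information matrix is therefore the $2\times2$ matrix carrying $\Ttt,\Ttp$ on its first row and $\Ttp,\Tpp-\Tpap$ on its second. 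Inverting this $2\times2$ matrix and reading off its two diagonal entries yields \eqref{eq_CRB_theta_result} and \eqref{eq_CRB_phi_result}.

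I expect the main obstacle to be the channel-averaging step. The delicate point is evaluating $\mean{\check{\vh}_k[q]\check{\vh}_k^\H[q]}$ for the ZF precoder so that the inverse-Wishart normalization cancels against $\alpha_{\mathtt{ZF}}^2[q]$ and leaves the clean $\bar{\mD}_{\beta}$ weighting, and confirming that the communication--sensing cross contributions average to zero. A secondary subtlety lies in the nuisance elimination: one must verify that the rank-one Schur correction induced by $\bar{\bm{\alpha}}$ indeed collapses to $\Tpap$ on the $\phi$-diagonal, i.e.\ that the azimuth--reflection coupling $\Tta$ does not survive into the effective angle block, rather than contaminating $\Ttt$ and the off-diagonal $\Ttp$.
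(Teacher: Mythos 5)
Your overall route coincides with the paper's proof in Appendix~C: write the per-subcarrier FIM entries in Slepian--Bangs/trace form (the paper cites \cite{bekkerman2006target} for exactly the expressions you derive), compute the averaged transmit covariance $\mR_x[q]=\frac{1}{L}\mean{\mX[q]\mX[q]^\H}$ by expanding the ZF-plus-sensing precoder and averaging over the channel statistics, substitute to obtain \eqref{eq_result_Ttt}--\eqref{eq_result_Taa}, and invert an effective $2\times2$ angle FIM. Your channel-averaging paragraph is precisely the paper's derivation of \eqref{eq_approx_cov}: the cross blocks vanish by circular symmetry, the sensing block produces the $c$-constants, and the inverse-Wishart identity together with $\alpha_{\mathtt{ZF}}^2[q]$ yields the $\bar{\mD}_{\beta}$ weighting and the prefactor $\bar{\kappa}$; the final $2\times2$ inversion matches \eqref{eq_CRB_theta_result}--\eqref{eq_CRB_phi_result}.

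The genuine gap is the nuisance-elimination step. The paper never proves that the $\bar{\bm{\alpha}}$-correction touches only the $(\phi,\phi)$ entry: it imports the "equivalent FIM" structure \eqref{eq_FIM} wholesale from \cite{wu2011antenna} (which is why $\Tta$ is defined in the theorem but never used). You instead assert that the full $4\times4$ Schur complement collapses to that form because "the structure of the steering-vector derivatives" kills the azimuth--reflection coupling, and you correctly flag this as the point needing verification --- but the claim is not true for this model. Writing $\mR_x[q]=\frac{1}{\Nt}\big(P_{\mathtt{c}}[q]\,\mI_{\Nt}+P_{\mathtt{s}}[q]\,\va[q]\va^\H[q]\big)$ with $P_{\mathtt{c}}[q]\triangleq\bm{\xi}^\T[q]\bar{\mD}_{\beta}\bm{\gamma}[q]$ and $P_{\mathtt{s}}[q]\triangleq\bm{\xi}^\T[q]\bm{\eta}[q]$, a direct expansion of $\mG[q]=\vb[q]\va^\H[q]$ gives
\begin{align*}
\tr{\mG[q]\,\mR_x[q]\,\dot{\mG}^\H_{\theta}[q]} = \big(P_{\mathtt{c}}[q]+\Nt P_{\mathtt{s}}[q]\big)\Big(\bdottheta^\H\vb[q] + \tfrac{\Nr}{\Nt}\,\va^\H[q]\,\adottheta\Big),
\end{align*}
and for the paper's UPA steering vectors (phases referenced to the first element) one has $\va^\H[q]\adottheta = j\pi\frac{f_q}{f_c}\cos\theta\sin\phi\,\Ntv\Nth(\Nth-1)/2$ with $\bdottheta^\H\vb[q]$ its receive-side analogue, so the bracket equals $j\pi\frac{f_q}{f_c}\cos\theta\sin\phi\,\frac{\Nrh\Nrv}{2}(\Nth-\Nrh)$. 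Hence $\Tta\neq\bm{0}$ whenever $\Nth\neq\Nrh$, and the honest Schur complement replaces $\Ttt$ by $\Ttt-\Tta\Taa^{-1}\Tta^\T$ and $\Ttp$ by $\Ttp-\Tta\Taa^{-1}\Tpa^\T$, which does not reproduce \eqref{eq_CRB_theta_result}--\eqref{eq_CRB_phi_result}. Conversely, in the symmetric case $\Nth=\Nrh$, $\Ntv=\Nrv$ (the paper's own simulation setting), the identical computation also annihilates $\Tpa$, so the correction $\Tpap$ you retain vanishes identically. In other words, there is no regime in which the coupling "concentrates on the $\phi$-diagonal" as a consequence of the steering structure alone; this step cannot be completed from the model, and the paper avoids it by citation rather than proof. (One minor point in your favor: taking the Schur complement of the subcarrier-summed $4\times4$ FIM, as you propose, is the statistically consistent treatment of the common reflection coefficient and matches how the theorem evaluates $\Tpap$ from summed quantities, whereas the paper's \eqref{eq_FIM} sums per-subcarrier Schur complements --- a different, generally smaller, quantity.)
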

\begin{proof}
	See Appendix \ref{appd_proof_CRB}.\epr
\end{proof}

\section{Communication and Sensing Power Allocation for EE Maximization} 
\label{sec_opt}
\subsection{Problem Formulation}
Our objective is to maximize the EE of the mMIMO-OFDM ISAC system by optimizing power allocation between communication and sensing, as well as among communication UEs. The EE maximization problem is formulated as follows:
\begin{subequations}\label{ori_probOSA}
	\begin{IEEEeqnarray} {rcl}
		& \underset{\substack{\bm{\Omega}}}{\textrm{maximize}}  & \quad  \mathtt{EE}_{\mathtt{c}}(\bm{\Omega}) + \omega \mathtt{EE}_{\mathtt{s}}(\bm{\Omega})
		\label{eq::probOSA::obj_func_OSA} \\
		&\mathtt{s.t.} 
		&\quad \SEck(\bm{\Omega}) \geq \SEth,\ \forall k,  \label{cons_comm_OSA} \\
		&& \quad \mathtt{CRB}_{\theta}(\bm{\Omega}) \leq \mathtt{CRB}_{\theta}^0,  \label{eq::probOSA::cons_CRB_theta} \\
		& &\quad\mathtt{CRB}_{\phi}(\bm{\Omega}) \leq \mathtt{CRB}_{\phi}^0,  \label{eq::probOSA::cons_CRB_phi} \\
		& &\quad \sum_{q=1}^Q \bm{\xi}^\T[q] \left( \bar{\mD}_{\beta} \bm{\gamma}[q] + \bm{\eta}[q] \right) \leq P_{\mathtt{max}}, \label{eq::probOSA::cons_power_MRT}\\
		& &\quad\gamma_k[q] + \eta_k[q] = 1, \forall k, q,  \label{eq_total_power_0} \\
		& &\quad \xi_k[q]\geq 0,\gamma_k[q] \geq 0,\eta_k[q] \geq 0, \forall k,q, \label{eq_xi_power}\quad%
	\end{IEEEeqnarray}
\end{subequations}
where $\omega$ represents a weighting coefficient, i.e., trade-off parameter, used to balance the EE between the communication and sensing subsystems or to prioritize one over the other in the weighted sum objective function \eqref{eq::probOSA::obj_func_OSA}. Furthermore, it is important to note that $\mathtt{EE}_{\mathtt{c}}$ and $\mathtt{EE}_{\mathtt{s}}$ have different units, as defined in \eqref{def_EEc} and \eqref{eq_EE}, respectively. Consequently, their weighted sum in \eqref{eq::probOSA::obj_func_OSA} serves as a dimensionless design objective metric, representing a normalized trade-off between the two EE components. 
Constraint \eqref{cons_comm_OSA} guarantees that the SE achieved by all the UEs exceeds a threshold $\SEth$. Constraints  \eqref{eq::probOSA::cons_CRB_theta} and \eqref{eq::probOSA::cons_CRB_phi} limit CRBs for $(\theta, \phi)$ to ceilings $\mathtt{CRB}_{\theta}^0$ and $\mathtt{CRB}_{\phi}^0$, respectively. Constraint \eqref{eq::probOSA::cons_power_MRT} caps the transmit power at a predefined value $P_{\mathtt{max}}$. Constraint \eqref{eq_total_power_0} guarantees that the power fractions are allocated to either communication or sensing functionalities. The problem has a nonconcave and fractional objective function \eqref{eq::probOSA::obj_func_OSA} and nonconvex constraints \eqref{cons_comm_OSA}--\eqref{eq::probOSA::cons_power_MRT}, making it inherently non-convex and challenging. To address this, we employ Dinkelbach's and SCA techniques as outlined next.

\subsection{Convex Approximation for \eqref{ori_probOSA}}
In the following step, we will convexify the nonconcave objective function and nonconvex constraints of \eqref{ori_probOSA}. In particular, we will leverage the preliminary results established in \eqref{inequality_xy_geq}-\eqref{inequality_txy_leq}, where  $\Ldp{\xi_k[q]}{\gamma_k[q]}$, $\Ldp{\xi_k[q]}{\eta_k[q]}$, $\Ltp{\xi_k[q]}{\sqrt{\gamma_k[q]}}{\sqrt{\eta_k[q]}}$ denote concave lower bounds of $\xi_k[q] \gamma_k[q]$, $\xi_k[q] \eta_k[q]$ and $\xi_k[q] \sqrt{\gamma_k[q] \eta_k[q]}$, respectively, while $\Udp{\xi_k[q]}{\gamma_k[q]}$, $\Udp{\xi_k[q]}{\eta_k[q]}$, $\Utp{\xi_k[q]}{\sqrt{\gamma_k[q]}}{\sqrt{\eta_k[q]}}$ represent their corresponding convex upper bounds. 

\subsubsection{Convexifying Objective Function \eqref{eq::probOSA::obj_func_OSA}} 
To tackle the non-convexity of \eqref{eq::probOSA::obj_func_OSA}, which arises from the ratio of concave and convex functions, we adopt the inner approximation framework~\cite{Beck:JGO:10}. Each iteration of Dinkelbach's method then requires solving:
\begin{align*}
	\!\!\! \underset{\substack{\bm{\Omega}}}{\textrm{max}} \{\SEc(\bm{\Omega})\!+\! \omega( \mathtt{CRB}^{-1}_{\theta}(\bm{\Omega}) \!+\! \mathtt{CRB}^{-1}_{\phi}(\bm{\Omega}) )  \!-\! \tau P_{\mathtt{tot}}(\bm{\Omega}) \} \nbthis \label{eq_obj_dkb}
\end{align*}
where $\tau$ is updated iteratively to drive the optimal value of \eqref{eq_obj_dkb} to zero. In each iteration, the objective function becomes more tractable compared to the original fractional form in \eqref{ori_probOSA}. However, it remains challenging due to the involvement of various communication, sensing, and power metrics, specifically $\SEc(\bm{\Omega})$, $\mathtt{CRB}^{-1}_{\theta}(\bm{\Omega})$, $\mathtt{CRB}^{-1}_{\phi}(\bm{\Omega})$, and $P_{\mathtt{tot}}(\bm{\Omega})$. These metrics are complex non-convex functions with tightly coupled power variables in $\bm{\Omega}$. In the following, we address each of these terms in detail.

\paragraph{Convex Approximation of $\mathtt{EE}_{\mathtt{c}}(\bm{\Omega})$}
Recall that $\SEc \triangleq
\sum_{k=1}^K \SEck(\bm{\Omega}) =  \sum_{k=1}^K \sum_{q=1}^Q\! \log_2 \!\left(\! 1\! +\! \frac{\mathcal{R}_{\mathtt{Nu}, k} (\bm{\Omega})[q]}{\mathcal{R}_{\mathtt{De}, k} (\bm{\Omega})[q]}\right)$ is given in \eqref{eq_rate1}, where
\begin{align*}
	\mathcal{R}_{\mathtt{Nu}, k} (\bm{\Omega})[q] &\triangleq {\xi_k[q]} \left( {\alpha_{\mathtt{ZF}}^2[q]} \gamma_k[q] + {\frac{1}{\Nt}} \eta_k[q] \abs{\vh_k^\H [q] \va[q]}^2 \right. \\
	&\hspace{0.2cm} \left. + 2  {\frac{ {\alpha_{\mathtt{ZF}}[q]}}{\sqrt{\Nt}}} \sqrt{\gamma_k[q] \eta_k[q]} \re{\vh_k^\T [q] \va^*[q]}\right),  \nbthis \label{eq_ori_RNu} \\
	\mathcal{R}_{\mathtt{De}, k} (\bm{\Omega})[q] &\triangleq {\frac{1}{\Nt}}  \sum_{j \neq k} {\xi_j[q]} \eta_j[q] \abs{\vh_k^\H [q] \va[q]}^2  + \sigmac. \nbthis \label{eq_ori_RDe}
\end{align*}
 Let {$\bm{\Omega}^{(i)} = \{ \xi_k^{(i)}[q], \gamma_k^{(i)}[q], \eta_k^{(i)}[q] \}_{\forall k, q}$} denote the feasible point set found at iteration $i$. By \cite[Eq. (74)]{NasirTWC21}, we first find the global concave lower bound of $\ln\big(1+x/y\big)$ as
\begin{align*}
	\SEck(\bm{\Omega}) &\geq \frac{1}{\ln2} \sum_{q=1}^Q \Big[ {A_{k}^{(i)}}[q] - \frac{{B_{k}^{(i)}}[q]}{\mathcal{R}_{\mathtt{Nu}, k} (\bm{\Omega})[q]} \nonumber \\
    &\qquad\quad - {C_{k}^{(i)}}[q] \mathcal{R}_{\mathtt{De}, k} (\bm{\Omega})[q] \Big] 
	\triangleq	\SEck^{(i)}(\bm{\Omega}), \nbthis \label{eq_approx_Rk}
\end{align*}
where 
\begin{align*}
	{A_{k}^{(i)}}[q] &\triangleq \ln\Big(1+\frac{\mathcal{R}_{\mathtt{Nu}, k} (\bm{\Omega}^{(i)})[q]}{\mathcal{R}_{\mathtt{De}, k} (\bm{\Omega}^{(i)})[q]}\Big) \\
	&\hspace{2cm} + 2 \frac{\mathcal{R}_{\mathtt{Nu}, k} (\bm{\Omega}^{(i)})[q]}{\mathcal{R}_{\mathtt{Nu}, k} (\bm{\Omega}^{(i)})[q]+\mathcal{R}_{\mathtt{De}, k} (\bm{\Omega}^{(i)})[q]},\nonumber\\
	{B_{k}^{(i)}}[q] &\triangleq \frac{(\mathcal{R}_{\mathtt{Nu}, k} (\bm{\Omega}^{(i)})[q])^2}{\mathcal{R}_{\mathtt{Nu}, k} (\bm{\Omega}^{(i)})[q]+\mathcal{R}_{\mathtt{De}, k} (\bm{\Omega}^{(i)})[q]},\nonumber\\
	{C_{k}^{(i)}}[q] &\triangleq  \frac{\mathcal{R}_{\mathtt{Nu}, k} (\bm{\Omega}^{(i)})[q]}{(\mathcal{R}_{\mathtt{Nu}, k} (\bm{\Omega}^{(i)})[q]+\mathcal{R}_{\mathtt{De}, k} (\bm{\Omega}^{(i)})[q])\mathcal{R}_{\mathtt{De}, k} (\bm{\Omega}^{(i)})[q]}.
\end{align*}
Next,  substituting $\Ldp{\xi_k[q]}{\gamma_k[q]}$, $\Ldp{\xi_k[q]}{\eta_k[q]}$ and $\Ltp{\xi_k[q]}{\sqrt{\gamma_k[q]}}{\sqrt{\eta_k[q]}}$ into \eqref{eq_ori_RNu}, we obtain the following approximation of $\mathcal{R}_{\mathtt{Nu}, k} (\bm{\Omega})[q]$
{\small\begin{align*}
	&\mathcal{R}_{\mathtt{Nu}, k} (\bm{\Omega})[q] \\
    &\geq {\alpha_{\mathtt{ZF}}^2[q]} \Ldp{\xi_k[q]}{\gamma_k[q]} + {\frac{1}{\Nt}} \abs{\vh_k^\H [q] \va[q]}^2 \Ldp{\xi_k[q]}{\eta_k[q]}  \\
	&\hspace{0.5cm} + 2  {\frac{ {\alpha_{\mathtt{ZF}}[q]}}{\sqrt{\Nt}}} \Ltp{\xi_k[q]}{\sqrt{\gamma_k[q]}}{\sqrt{\eta_k[q]}} \re{\vh_k^\T [q] \va^*[q]} \\ 
    &\triangleq \tilde{\mathcal{R}}_{\mathtt{Nu}, k} (\bm{\Omega})[q]. \nbthis \label{RNu_lwb}
\end{align*}}Furthermore, the nonconvex term $\mathcal{R}_{\mathtt{De}, k} (\bm{\Omega})[q]$ can be converted to a convex form by deriving a convex upper bound $\Udp{\xi_j[q]}{\eta_j[q]}$,  such as
\begin{align*}
	\mathcal{R}_{\mathtt{De}, k} (\bm{\Omega})[q] &\leq {\frac{1}{\Nt}}  \sum_{j \neq k} \Udp{\xi_j[q]}{\eta_j[q]} \abs{\vh_k^\H [q] \va[q]}^2  + \sigmac \\
	&\triangleq \tilde{\mathcal{R}}_{\mathtt{De}, k} (\bm{\Omega})[q]. \nbthis \label{RDe_upb}
\end{align*}
Substituting the convex bounds \eqref{RNu_lwb} and \eqref{RDe_upb} into \eqref{eq_approx_Rk} yields
\begin{align*}
	&\SEck(\bm{\Omega}) \geq \SEck^{(i)}(\bm{\Omega}) \nonumber \\
	&\geq \frac{1}{\ln2} \sum_{q=1}^Q \Big[ {A_{k}^{(i)}}[q] - \frac{{B_{k}^{(i)}}[q]}{\tilde{\mathcal{R}}_{\mathtt{Nu}, k} (\bm{\Omega})[q]} - {C_{k}^{(i)}}[q] \tilde{\mathcal{R}}_{\mathtt{De}, k} (\bm{\Omega})[q] \Big] \\
	&\triangleq	\wtSEck^{(i)}(\bm{\Omega}),\nbthis \label{eq_approx_Rk_new}
\end{align*}
satisfying $\SEck^{(i)}(\bm{\Omega}) = \SEck^{(i)}(\bm{\Omega}^{(i)})$. 

\paragraph{Convex Approximation of $\mathtt{CRB}_{\theta}^{-1} (\bm{\Omega})$}  Recalling $\mathtt{CRB}_{\theta} (\bm{\Omega})$ in \eqref{eq_CRB_theta_result}, the first term $\Ttt(\bm{\Omega})$ of its inverse can be rewritten by
\begin{IEEEeqnarray}{cl}
		&\Ttt(\bm{\Omega})= \bar{\kappa} \abs{\alpha}^2 \!\sum_{q=1}^Q \!\left( \!\bm{\xi}^\T[q] \hat{\mD}_{\beta}^{\theta \theta}[q] \bm{\gamma}[q]  \!+\! \bm{\xi}^\T[q] \bm{\eta}[q] c_{\theta \theta}[q] \right) \nonumber\\
		&=  \bar{\kappa} \abs{\alpha}^2\! \sum_{q=1}^Q \!\sum_{k=1}^{K} \!\left(\! \xi_k[q] \gamma_k[q]\! \left[ \hat{\mD}_{\beta}^{\theta \theta}[q]  \right]_{kk} \!\!\!\!\!+\! \xi_k[q] \eta_k[q] c_{\theta \theta}[q] \right).\qquad  \label{tau_theta_theta}
\end{IEEEeqnarray}
From \eqref{tau_theta_theta}, a concave lower bound for $\Ttt(\bm{\Omega})$ is achieved by repurposing $\Ldp{\xi_k[q]}{\gamma_k[q]}$ and $\Ldp{\xi_k[q]}{\eta_k[q]}$ that yields
\begin{align*}
    \!\!\!\Ttt(\bm{\Omega}) 
    &\geq  \bar{\kappa} \abs{\alpha}^2 \sum_{q=1}^Q \sum_{k=1}^{K} \left( \Ldp{\xi_k[q]}{\gamma_k[q]} \left[ \hat{\mD}_{\beta}^{\theta \theta}[q]  \right]_{kk} \right. \\
    &\hspace{1.0cm} \left. + \Ldp{\xi_k[q]}{\eta_k[q]} c_{\theta \theta}[q] \right) \triangleq \Ttt^{(i)}(\bm{\Omega}). \nbthis \label{tau_theta_theta_lwb}
\end{align*}		
Next, we introduce the auxiliary variables $\{\slvar{A}{x}{}, \slvar{A}{y}{}, \slvar{A}{z}{}\}$ to replace the non-convex term $\frac{\Ttp^2(\bm{\Omega})}{\Tpp(\bm{\Omega}) - \Tpap(\bm{\Omega})}$ by
\begin{align*}
	\frac{\Ttp^2(\bm{\Omega})}{\Tpp(\bm{\Omega}) - \Tpap(\bm{\Omega})} \overset{(a)}{\leq} \frac{\slvar{A}{x}{2}}{\slvar{A}{y}{}} \overset{(b)}{\leq} \slvar{A}{z}{}, \nbthis \label{crlb_soc}
\end{align*}
which results in 
\begin{align*}
	\mathtt{CRB}^{-1}_{\theta}(\bm{\Omega}) \geq \Ttt^{(i)}(\bm{\Omega}) - \slvar{A}{z}{} \triangleq \Big[\mathtt{CRB}^{-1}_{\theta}\Big]^{(i)}(\bm{\Omega}). \nbthis \label{eq_CRB_theta_inv_lwb}
\end{align*}
By employing the second-order cone (SOC) constraint transformation, $(b)$ can be rewritten equivalently as
\begin{align*}
	\Big\|\Big[ 2 \slvar{A}{x}{};\; \Big(\slvar{A}{y}{} - \slvar{A}{z}{} \Big)\!\Big]\!\Big\|_2\! \leq \!\Big(\slvar{A}{y}{} + \slvar{A}{z}{} \Big), \nbthis \label{crlb_soc_b} 
\end{align*}
with
\begin{align*}
    \Ttp(\bm{\Omega}) \leq \slvar{A}{x}{}, \text{and }
		\Tpp(\bm{\Omega}) - \Tpap(\bm{\Omega}) \geq \slvar{A}{y}{}. \nbthis \label{crlb_soc_a}
\end{align*} 
Since the left-hand side (LHS) of \eqref{crlb_soc_a} is still nonconvex, it is necessary to convexify it.  The function $\Ttp(\bm{\Omega})$ in \eqref{eq_result_Ttp} can be rewritten as
$\Ttp(\bm{\Omega})= \bar{\kappa} \abs{\alpha}^2\sum_{q=1}^Q \sum_{k=1}^{K}\left( \xi_k[q] \gamma_k[q]\left[\hat{\mD}_{\beta}^{\theta \phi}[q]  \right]_{kk}+\xi_k[q] \eta_k[q] c_{\theta \phi}[q] \right)$, which is upper bounded as
\begin{align*}
	\!\!\!\!\!\Ttp(\bm{\Omega})
	&\leq  \bar{\kappa} \abs{\alpha}^2 \sum_{q=1}^Q \sum_{k=1}^{K} \left( \Udp{\xi_k[q]}{\gamma_k[q]} \left[ \hat{\mD}_{\beta}^{\theta \phi}[q]  \right]_{kk} \right. \\
	&\hspace{1.5cm} \left. + \Udp{\xi_k[q]}{\eta_k[q]} c_{\theta \phi}[q] \right) \triangleq \Ttp^{(i)}(\bm{\Omega}). \nbthis \label{tau_theta_phi_upb}
\end{align*}
The first constraint in \eqref{crlb_soc_a} is iteratively replaced by the following convex constraint:
\begin{align*}
	\Ttp^{(i)}(\bm{\Omega}) \leq \slvar{A}{x}{}. \nbthis \label{crlb_soc_a1_approx}
\end{align*}
Similar to \eqref{tau_theta_theta_lwb} in approximating $\Ldp{\xi_k[q]}{\gamma_k[q]}$ and $\Ldp{\xi_k[q]}{\eta_k[q]}$, the global lower bound  of $\Tpp(\bm{\Omega})$ can be found as
\begin{align*}
	\!\!\!\!\!\Tpp(\bm{\Omega})
	&\geq  \bar{\kappa} \abs{\alpha}^2 \sum_{q=1}^Q \sum_{k=1}^{K} \left( \Ldp{\xi_k[q]}{\gamma_k[q]} \left[ \hat{\mD}_{\beta}^{\phi \phi}[q]  \right]_{kk} \right. \\
	&\hspace{1.5cm} \left. + \Ldp{\xi_k[q]}{\eta_k[q]} c_{\phi \phi}[q] \right) \triangleq \Tpp^{(i)}(\bm{\Omega}). \nbthis \label{tau_phi_phi_lwb}
\end{align*}

We further derive an upper bound for the complex term $\Tpap(\bm{\Omega})$. This presents a significant challenge due to the complexity of triples in $\Tpap(\bm{\Omega})$. To address this, we leverage the unique properties of each component within this term. We observe that the real 
component of $\Tpa(\bm{\Omega})$ exhibits a complex nature due to the presence of $\alpha^*$ and the asymmetrical structure of $\hat{\mD}_{\beta}^{\phi}[q]$ and $c_{\phi}[q]$. Consequently, multiplying by $[1,j]$ results in a vector whose first element comprises the real parts of $\re{\alpha^* \hat{\mD}_{\beta}^{\phi}[q]}$ and $\re{\alpha^* c_{\phi}[q]}$, and whose second element contains the negative imaginary parts of these same terms, \textit{i.e.}, $\im{\alpha^* \hat{\mD}_{\beta}^{\phi}[q]}$ and $\im{\alpha^* c_{\phi}[q]}$. From this observation, the vector $\Tpa(\bm{\Omega})$ can be expressed as
\begin{align*}
		\!\!\!&\Tpa(\bm{\Omega}) \!
		=\! \bar{\kappa}  \Big[ \!\sum_{q=1}^Q \bm{\xi}^\T[q] \!\left( \re{\alpha^* \hat{\mD}_{\beta}^{\phi}[q]} \bm{\gamma}[q] \!+\! \bm{\eta}[q] \re{\alpha^* c_{\phi}[q]} \right), \\
		&\hspace{0.6cm}-\sum_{q=1}^Q \bm{\xi}^\T[q] \!\left( \im{\alpha^* \hat{\mD}_{\beta}^{\phi}[q]} \bm{\gamma}[q] \!\!+\! \bm{\eta}[q] \im{\alpha^* c_{\phi}[q]} \right) \Big]. \nbthis \label{tpa_extract}
	\end{align*} 
Here, $\Taa(\bm{\Omega})$ is represented as the sum of a real scalar and the $2 \times 2$ identity matrix $\mI_2$. Its inverse, $\Taa^{-1}(\bm{\Omega})$, is given by:
\begin{align*}
	\!\!\!\Taa^{-1}(\bm{\Omega}) \!=\! \Big[\bar{\kappa} \sum_{q=1}^Q \bm{\xi}^\T[q] \left(  \hat{\mD}_{\beta}[q] \bm{\gamma}[q] + \bm{\eta}[q] c_{2}[q] \right) \Big]^{-1} \mI_2. \nbthis \label{taa_inv}
\end{align*}	
Let $\reim \triangleq \{\resym, \imsym\}$ represent the set containing the real and imaginary components, and define $f_{\reim}(x) \triangleq \reim(x)$ as the operation that applies the corresponding element of $\reim$ to the input $x$. The results in \eqref{tpa_extract} and \eqref{taa_inv} yield a reformulated expression for $\Tpap(\bm{\Omega})$, presented in \eqref{tTt_ori} at the top of the next page. To handle ${\TpapwhX (\bm{\Omega}); \whreim \in \reim}$, we apply the same approximation method as in \eqref{crlb_soc}.
	\begin{figure*}
		\begin{align*}
			\Tpap(\bm{\Omega})
			&\!=\! \bar{\kappa} \sum_{\whreim \in \reim} \frac{ \left[ \sum_{q=1}^Q \bm{\xi}^\T[q] \left( f_{\whreim}\Big({\alpha^* \hat{\mD}_{\beta}^{\phi}[q]}\Big) \bm{\gamma}[q] + \bm{\eta}[q] f_{\whreim}\Big({\alpha^* c_{\phi}[q]}\Big) \right) \right]^2 }{\sum_{q=1}^Q \bm{\xi}^\T[q] \left(  \hat{\mD}_{\beta}[q] \bm{\gamma}[q] + \bm{\eta}[q] c_{2}[q] \right)}  \\
			&\!=\! \bar{\kappa} \sum_{\whreim \in \reim} \underbrace{\frac{  \left[ \sum_{q=1}^Q \sum_{k=1}^{K} \left( \xi_k[q] \gamma_k[q] f_{\whreim}\Big({\alpha^* \left[ \hat{\mD}_{\beta}^{\phi}[q] \right]_{kk}}\Big) + \xi_k[q] \eta_k[q] f_{\whreim}\Big({\alpha^* c_{\phi}[q]}\Big) \right) \right]^2 }{\sum_{q=1}^Q \sum_{k=1}^{K} \left( \xi_k[q] \gamma_k[q] \left[ \hat{\mD}_{\beta}[q] \right]_{kk} + \xi_k[q] \eta_k[q] c_{2}[q] \right)} }_{\triangleq \TpapwhX (\bm{\Omega})}. \nbthis \label{tTt_ori}
		\end{align*}
		\hrule
		\hrule
	\end{figure*}	
	In particular, we introduce the auxiliary variable sets $\{\slvar{I}{x}{\reim}, \slvar{I}{y}{\reim}, \slvar{I}{z}{\reim}\}$ to deal with the complex rational part in $\TpapR (\bm{\Omega})$ and $\TpapI (\bm{\Omega})$, respectively, results in
\begin{align*}
	\TpapX (\bm{\Omega}) \leq \frac{(\slvar{I}{x}{\reim})^2}{\slvar{I}{y}{\reim}} \leq \slvar{I}{z}{\reim}. \nbthis \label{tTt_soc}
\end{align*}
The second constraint in \eqref{crlb_soc_a} is rewritten equivalently as
\begin{align*}
	\Tpp^{(i)}(\bm{\Omega}) - \bar{\kappa} \left(\slvar{I}{z}{\resym} + \slvar{I}{z}{\imsym} \right) \geq \slvar{A}{y}{}. \nbthis \label{crlb_soc_a2_approx}
\end{align*}
We note that the corresponding SOC constraint for \eqref{tTt_soc} is given by
\begin{align*}
	\Big\|\Big[ 2 \slvar{I}{x}{\reim};\; \Big(\slvar{I}{y}{\reim} - \slvar{I}{z}{\reim} \Big)\!\Big]\!\Big\|_2\! \leq \!\Big(\slvar{I}{y}{\reim} + \slvar{I}{z}{\reim} \Big), \nbthis \label{tTt_soc_b} 
\end{align*}
with
\begin{align*}
	\!&\sum_{q=1}^Q \sum_{k=1}^{K} \left( \xi_k[q] \gamma_k[q] f_{\reim} \left(\alpha^* \left[ \hat{\mD}_{\beta}^{\phi}[q] \right]_{kk}\right) \right. \\
	&\hspace{3cm} \left. + \xi_k[q] \eta_k[q] f_{\reim} \left(\alpha^* c_{\phi}[q]\right) \right) \leq \slvar{I}{x}{\reim}, \nbthis \label{tTt_soc_a1}\\
	\!&\sum_{q=1}^Q \sum_{k=1}^{K} \left( \xi_k[q] \gamma_k[q] \left[ \hat{\mD}_{\beta}[q] \right]_{kk} \!\!\!+\! \xi_k[q] \eta_k[q] c_{2}[q] \right) \!\geq\! \slvar{I}{y}{\reim}. \nbthis \label{tTt_soc_a2}
\end{align*} 
It is evident that the denominators in $\TpapR (\bm{\Omega})$ and $\TpapI (\bm{\Omega})$ are identical. Therefore, $\slvar{I}{y}{\reim}$ in \eqref{tTt_soc_a2} can be addressed with a single constraint. Since the left-hand sides (LHS) of \eqref{tTt_soc_a1} and \eqref{tTt_soc_a2} remain non-convex, it is essential to establish bounds to obtain convex forms. Specifically, we reuse the upper bounds $\Udp{\xi_k[q]}{\gamma_k[q]}$ and $\Udp{\xi_k[q]}{\eta_k[q]}$ for \eqref{tTt_soc_a1}, while the lower bounds $\Ldp{\xi_k[q]}{\gamma_k[q]}$ and $\Ldp{\xi_k[q]}{\eta_k[q]}$ are used for \eqref{tTt_soc_a2}. Consequently, the inequalities in \eqref{tTt_soc_a1} and \eqref{tTt_soc_a2} are reformulated as follows:
\begin{align*}
		\!\!\!&\!\!\!\!\sum_{q=1}^Q \sum_{k=1}^{K} \!\!\left( \Udp{\xi_k[q]}{\gamma_k[q]} f_{\reim} \left(\alpha^* \left[ \hat{\mD}_{\beta}^{\phi}[q] \right]_{kk}\right) \right. \\
		&\hspace{2.0cm} \left. + \Udp{\xi_k[q]}{\eta_k[q]} f_{\reim} \left(\alpha^* c_{\phi}[q]\right) \right) \!\leq\! \slvar{I}{x}{\reim}, \nbthis \label{tTt_soc_a1_upb}\\
		\!\!\!&\!\!\!\!\sum_{q=1}^Q \sum_{k=1}^{K} \!\!\left( \Ldp{\xi_k[q]}{\gamma_k[q]} \!\! \left[ \hat{\mD}_{\beta}[q] \right]_{kk} \!\!\!\!\!+\! \Ldp{\xi_k[q]}{\eta_k[q]} c_{2}[q] \right) \!\!\geq\!\! \slvar{I}{y}{\reim}. \nbthis \label{tTt_soc_a2_lwb}
\end{align*} 
\paragraph{Convex Approximation of $\mathtt{CRB}^{-1}_{\phi}(\bm{\Omega})$} The third term in the objective \eqref{eq_obj_dkb}, \textit{i.e.}, $\mathtt{CRB}^{-1}_{\phi}(\bm{\Omega})$ with $\mathtt{CRB}_{\phi}(\bm{\Omega})$ given in \eqref{eq_CRB_phi_result}, can be re-expressed as
\begin{align*}
	\mathtt{CRB}^{-1}_{\phi}(\bm{\Omega}) = \Tpp(\bm{\Omega}) - \Tpap(\bm{\Omega}) - \frac{\Ttp^2(\bm{\Omega})}{\Ttt(\bm{\Omega})}. \nbthis \label{eq_CRB_phi_inv}
\end{align*}
Since $\mathtt{CRB}^{-1}_{\phi}(\bm{\Omega})$ is highly non-convex, we approximate it by deriving a concave lower bound for $\Tpp(\bm{\Omega})$, convex upper bounds for $\Tpap(\bm{\Omega})$ and $\frac{\Ttp^2(\bm{\Omega})}{\Ttt(\bm{\Omega})}$. The derivations of lower bounding $\Tpp(\bm{\Omega})$ and upper bounding $\Tpap(\bm{\Omega})$ are determined as outlined in \eqref{crlb_soc_a2_approx}. To find the upper bound of $\frac{\Ttp^2(\bm{\Omega})}{\Ttt(\bm{\Omega})}$, the auxiliary variables $\{\slvar{B}{x}{}, \slvar{B}{y}{}, \slvar{B}{z}{}\}$ are introduced such that
\begin{align*}
	\frac{\Ttp^2(\bm{\Omega})}{\Ttt(\bm{\Omega})}
	\overset{(a)}{\leq} \frac{\slvar{B}{x}{2}}{\slvar{B}{y}{}} \overset{(b)}{\leq} \slvar{B}{z}{}. \nbthis \label{ttp_ttt_soc}
\end{align*}
Employing the SOC constraint transformation for $(b)$ yields
\begin{align*}
	\Big\|\Big[ 2 \slvar{B}{x}{};\; \Big(\slvar{B}{y}{} - \slvar{B}{z}{} \Big)\!\Big]\!\Big\|_2\! \leq \!\Big(\slvar{B}{y}{} + \slvar{B}{z}{} \Big), \nbthis \label{ttp_ttt_soc_b} 
\end{align*}
with
\begin{align*}
	\!\!\!\!\!\!\sum_{q=1}^Q \sum_{k=1}^{K} \left( \xi_k[q] \gamma_k[q] \left[  \hat{\mD}_{\beta}^{\theta \phi}[q]  \right]_{kk} \!+\! \xi_k[q] \eta_k[q]  c_{\theta \phi}[q] \right) \!&\leq\! \slvar{B}{x}{}, \nbthis \label{ttp_ttt_soc_a1}\\
	\!\!\!\!\!\!\sum_{q=1}^Q \sum_{k=1}^{K} \left( \xi_k[q] \gamma_k[q] \left[ \hat{\mD}_{\beta}^{\theta \theta}[q]  \right]_{kk} \!+\! \xi_k[q] \eta_k[q] c_{\theta \theta}[q] \right) \!&\geq\! \slvar{B}{y}{}. \nbthis \label{ttp_ttt_soc_a2}
\end{align*} 
The LHS of \eqref{ttp_ttt_soc_a1} and \eqref{ttp_ttt_soc_a2} can be approximated using the upper bounds $\Udp{\xi_k[q]}{\gamma_k[q]}$ and $\Udp{\xi_k[q]}{\eta_k[q]}$ for \eqref{ttp_ttt_soc_a1} and the lower bounds $\Ldp{\xi_k[q]}{\gamma_k[q]}$ and $\Ldp{\xi_k[q]}{\eta_k[q]}$ for \eqref{ttp_ttt_soc_a2}. The inequalities \eqref{ttp_ttt_soc_a1} and \eqref{ttp_ttt_soc_a2} are thus reformulated by
\begin{align*}
		&\sum_{q=1}^Q \sum_{k=1}^{K} \left( \Udp{\xi_k[q]}{\gamma_k[q]} \left[ \hat{\mD}_{\beta}^{\theta \phi}[q] \right]_{kk}  \right. \\
		&\hspace{3.0cm} \left. + \Udp{\xi_k[q]}{\eta_k[q]}  c_{\theta \phi}[q] \right) \leq \slvar{B}{x}{}, \nbthis \label{ttp_ttt_soc_a1_upb}\\
		&\sum_{q=1}^Q \sum_{k=1}^{K} \left( \Ldp{\xi_k[q]}{\gamma_k[q]} \left[  \hat{\mD}_{\beta}^{\theta \theta}[q] \right]_{kk}  \right. \\
		&\hspace{3.0cm} \left. + \Ldp{\xi_k[q]}{\eta_k[q]} c_{\theta \theta}[q]  \right) \geq \slvar{B}{y}{}. \nbthis \label{ttp_ttt_soc_a2_lwb}
	\end{align*} 
As a result, $\mathtt{CRB}^{-1}_{\phi}(\bm{\Omega})$ is iteratively replaced by
\begin{align*}
		\!\!\!\!\!\mathtt{CRB}^{-1}_{\phi}(\bm{\Omega}) \!\geq\! \Tpp^{(i)}(\bm{\Omega}) \!-\! \bar{\kappa} (\slvar{I}{z}{\resym} \!+\! \slvar{I}{z}{\imsym}) \!-\!  \slvar{B}{z}{} \!\triangleq\! \Big[\mathtt{CRB}^{-1}_{\phi}\Big]^{(i)}\!\!(\bm{\Omega}). \nbthis \label{eq_CRB_phi_inv_lwb} 
	\end{align*}

\paragraph{Convex Approximation of $P_{\mathtt{tot}}(\bm{\Omega})$}
For the last term $P_{\mathtt{tot}}(\bm{\Omega})$ in the objective function \eqref{eq_obj_dkb}, it is necessary to provide the convex upper bound approximation to $P_{\mathtt{TX}}(\bm{\Omega})$ given in \eqref{eq_Ptx} due to its nonconvexity. By reemploying $\Udp{\xi_k[q]}{\gamma_k[q]}$ and $\Udp{\xi_k[q]}{\eta_k[q]}$, we have
\begin{align*}
		P_{\mathtt{TX}}(\bm{\Omega}) &\!=\! \sum_{q=1}^Q \!\sum_{k=1}^{K}\! \left( \xi_k[q]\! \gamma_k[q] \!\left[ \bar{\mD}_{\beta} \right]_{kk} \!+\! \xi_k[q] \eta_k[q] \right) \\
		&\!\leq\! \sum_{q=1}^Q\! \sum_{k=1}^{K}\! \left( \Udp{\xi_k[q]}{\gamma_k[q]} \!\left[ \bar{\mD}_{\beta} \right]_{kk} \!+\! \Udp{\xi_k[q]}{\eta_k[q]} \right) \\ 
		&\!\triangleq\! P_{\mathtt{TX}}^{(i)}(\bm{\Omega}). \nbthis \label{PTX}
	\end{align*}		
Recall that $\SEc = \sum_{k=1}^K \SEck(\bm{\Omega}) =  \sum_{k=1}^K \sum_{q=1}^Q\! \log_2 \!\left(\! 1\! +\! \frac{\mathcal{R}_{\mathtt{Nu}, k} (\bm{\Omega})[q]}{\mathcal{R}_{\mathtt{De}, k} (\bm{\Omega})[q]}\right)$ is given in \eqref{eq_rate1}, where $\mathcal{R}_{\mathtt{Nu}, k} (\bm{\Omega})[q]$ and $\mathcal{R}_{\mathtt{De}, k} (\bm{\Omega})[q]$ are expressed as in \eqref{eq_ori_RNu} and \eqref{eq_ori_RDe} as the numerator and denominator of $\SEck(\bm{\Omega})$, respectively. Following the inequalities in \cite[eq(75-76)]{sheng2018power}, we obtain the convex upper bound of $\SEck(\bm{\Omega})$ as follows
\begin{align*}
		\SEck(\bm{\Omega}) 
		&\leq  \frac{1}{\ln2} \sum_{q=1}^Q \left[ {\hat{A}_{k}^{(i)}}[q] + 0.5 {\hat{B}_{k}^{(i)}}[q]   \right. \\
		&\hspace{0.2cm} \left. \times \left( {\hat{C}_{k}^{(i)}}[q] \frac{\Big( \mathcal{R}_{\mathtt{Nu}, k} (\bm{\Omega})[q] \Big)^2}{ \mathcal{R}_{\mathtt{De}, k} (\bm{\Omega})[q]}  +  \frac{\Big({{\hat{C}_{k}^{(i)}}[q]}\Big)^{-1}}{\mathcal{R}_{\mathtt{De}, k} (\bm{\Omega})[q]} \right) \right] \\
		&\triangleq	\SEck^{(i)}(\bm{\Omega}), \nbthis \label{eq_approx_R}
	\end{align*}
where
\begin{align*}
	{\hat{A}_{k}^{(i)}}[q] &\triangleq \ln\Big(1+\frac{\mathcal{R}_{\mathtt{Nu}, k} (\bm{\Omega}^{(i)})[q]}{\mathcal{R}_{\mathtt{De}, k} (\bm{\Omega}^{(i)})[q]}\Big) \\
	& \hspace{2cm} - \frac{\mathcal{R}_{\mathtt{Nu}, k} (\bm{\Omega}^{(i)})[q]}{\mathcal{R}_{\mathtt{Nu}, k} (\bm{\Omega}^{(i)})[q]+\mathcal{R}_{\mathtt{De}, k} (\bm{\Omega}^{(i)})[q]} , \\
	{\hat{B}_{k}^{(i)}}[q] &\triangleq \frac{\mathcal{R}_{\mathtt{De}, k} (\bm{\Omega}^{(i)})[q]}{\mathcal{R}_{\mathtt{Nu}, k} (\bm{\Omega}^{(i)})[q]+\mathcal{R}_{\mathtt{De}, k} (\bm{\Omega}^{(i)})[q]}, \\
	{\hat{C}_{k}^{(i)}}[q] &\triangleq \frac{1}{\mathcal{R}_{\mathtt{Nu}, k} (\bm{\Omega}^{(i)})[q]}.
\end{align*}
Since $\mathcal{R}_{\mathtt{Nu}, k} (\bm{\Omega})[q]$ and $\mathcal{R}_{\mathtt{De}, k} (\bm{\Omega})[q]$ are nonconvex, leading to the non-convexity of $\Big( \mathcal{R}_{\mathtt{\mathtt{Nu}}, k} (\bm{\Omega})[q] \Big)^2/\mathcal{R}_{\mathtt{De}, k} (\bm{\Omega})[q]$. To find the global convex upper bound of $\Big( \mathcal{R}_{\mathtt{Nu}, k} (\bm{\Omega})[q] \Big)^2/\mathcal{R}_{\mathtt{De}, k} (\bm{\Omega})[q]$, we introduce the auxiliary variables $\{\slvar{C}{k,x}{}[q], \slvar{C}{k,y}{}[q], \slvar{C}{k,z}{}[q]\}$ such that
\begin{align*}
	\frac{\Big( \mathcal{R}_{\mathtt{Nu}, k} (\bm{\Omega})[q] \Big)^2}{ \mathcal{R}_{\mathtt{De}, k} (\bm{\Omega})[q]}
	\overset{(a)}{\leq} \frac{\slvar{C}{k,x}{2}[q]}{\slvar{C}{k,y}{}[q]} \overset{(b)}{\leq} \slvar{C}{k,z}{}[q], \forall k,q. \nbthis \label{PR_soc}
\end{align*}
Employing SOC constraint transformation for $(b)$, we obtain
\begin{align*}
	\!\!\!\!\! \Big\|\Big[ 2 \slvar{C}{k,x}{}[q];\; \Big(\slvar{C}{k,y}{}[q] - \slvar{C}{k,z}{}[q] \Big)\!\Big]\!\Big\|_2\! \leq \!\Big(\slvar{C}{k,y}{}[q] + \slvar{C}{k,z}{}[q] \Big), \nbthis \label{PR_soc_b} 
\end{align*}
with
\begin{align*}
	\!\!\!\!\! \mathcal{R}_{\mathtt{Nu}, k} (\bm{\Omega})[q] \!\leq \slvar{C}{k,x}{}[q], \text{and }
	\mathcal{R}_{\mathtt{De}, k} (\bm{\Omega})[q] \!\geq \slvar{C}{k,y}{}[q], \forall k,q. \!\nbthis \label{PR_soc_a}
\end{align*} 

Next, we need to derive a convex upper bound for $\mathcal{R}_{\mathtt{Nu}, k} (\bm{\Omega})[q]$ and concave lower bounds for $\mathcal{R}_{\mathtt{De}, k} (\bm{\Omega})[q]$.
Given the convex upper bounds for $\Udp{\xi_k[q]}{\gamma_k[q]}$, $\Udp{\xi_k[q]}{\eta_k[q]}$, and $\Utp{\xi_k[q]}{\sqrt{\gamma_k[q]}}{\sqrt{\eta_k[q]}}$, we have
	\begin{align*}
		&\mathcal{R}_{\mathtt{Nu}, k} \!(\bm{\Omega})[q] \\ 
		&\!\leq\! {\alpha_{\mathtt{ZF}}^2[q]} \!\Udp{\xi_k[q]}{\gamma_k[q]} \!\!+\!\! {\frac{1}{\Nt}}\!\! \abs{\vh_k^\H [q] \va[q]}^2 \!\Udp{\xi_k[q]}{\eta_k[q]}  \\
		&\hspace{0.2cm} \!+\! 2  {\frac{ {\alpha_{\mathtt{ZF}}[q]}}{\sqrt{\Nt}}} \Utp{\xi_k[q]}{\sqrt{\gamma_k[q]}}{\sqrt{\eta_k[q]}}  \re{\vh_k^\T [q] \va^*[q]}, \\
		&\triangleq \widehat{\mathcal{R}}_{\mathtt{Nu}, k} (\bm{\Omega})[q], \nbthis \label{RNu_upb}
	\end{align*}
which serves as a convex upper bound of $\mathcal{R}_{\mathtt{Nu}, k} (\bm{\Omega})[q]$.  The concave lower bound of $\mathcal{R}_{\mathtt{De}, k} (\bm{\Omega})[q]$ is achieved by determining $\Ldp{\xi_j[q]}{\eta_j[q]}$, which is derived using a method similar to that of $\Ldp{\xi_k[q]}{\eta_k[q]}$, such as:
\begin{align*}
	\mathcal{R}_{\mathtt{De}, k} (\bm{\Omega})[q] &\geq {\frac{1}{\Nt}}  \sum_{j \neq k} \Ldp{\xi_j[q]}{\eta_j[q]} \abs{\vh_k^\H [q] \va[q]}^2  + \sigmac \\
	&\triangleq \widehat{\mathcal{R}}_{\mathtt{De}, k} (\bm{\Omega})[q]. \nbthis \label{RDe_lwb}
\end{align*}
The inequalities \eqref{PR_soc_a} can be rewritten equivalently as:
\begin{align*}
	\!\!\!\!\!\! \widehat{\mathcal{R}}_{\mathtt{Nu}, k} (\bm{\Omega})[q] \!\leq \slvar{C}{k,x}{}[q], \text{and } 
	\widehat{\mathcal{R}}_{\mathtt{De}, k} (\bm{\Omega})[q] \!\geq \slvar{C}{k,y}{}[q], \forall k,q. \!\nbthis \label{PR_soc_a_approx}
\end{align*} 
Substituting \eqref{PR_soc}, \eqref{RNu_upb}, and \eqref{RDe_lwb} into \eqref{eq_approx_R}, we obtain the upper bound of $\SEc$ as follows:
\begin{align*}
		\!\!\!\!\!&\SEck(\bm{\Omega}) \leq  \SEck^{(i)}(\bm{\Omega}) \\
		\!\!\!\!\!&\leq \frac{1}{\ln2} \sum_{q=1}^Q \left[ {\hat{A}_{k}^{(i)}}[q] + 0.5 {\hat{B}_{k}^{(i)}}[q]   \right. \\
		\!\!\!\!\!&\hspace{0.5cm} \left. \times \left( {\hat{C}_{k}^{(i)}}[q] \slvar{C}{k,z}{}[q]  \!+\! \frac{\Big({{\hat{C}_{k}^{(i)}}[q]}\Big)^{-1}}{\widehat{\mathcal{R}}_{\mathtt{De}, k} (\bm{\Omega})[q]} \right) \right] \!\triangleq\!	\whSEck^{(i)}(\bm{\Omega}). \nbthis \label{eq_approx_R_new}
\end{align*}

From \eqref{PTX} and \eqref{eq_approx_R_new}, we have
\begin{align*}
    \!\!\!\!\! P_{\mathtt{tot}}(\bm{\Omega}) \!\leq\! \frac{1}{\varrho} P_{\mathtt{TX}}^{(i)}(\bm{\Omega}) \!+\! P_0 \!+\! \epsilon \sum_{k=1}^{K} \whSEck^{(i)}(\bm{\Omega}) \!\triangleq\! P^{(i)}(\bm{\Omega}). \nbthis \label{eq_total_power_approx}
\end{align*}	

\subsubsection{Convexifying Rate Threshold Constraint} The convexity of the constraint \eqref{cons_comm_OSA} is achieved by deriving a concave lower bound for $\SEck(\bm{\Omega})$, as formulated in \eqref{eq_approx_Rk_new}, which can be expressed as:
\begin{align*}
	\wtSEck^{(i)}(\bm{\Omega}) \geq \SEth, \forall k. \nbthis \label{cons_comm_OSA_approx}
\end{align*}

\subsubsection{Convexifying CRB Constraints \eqref{eq::probOSA::cons_CRB_theta} and \eqref{eq::probOSA::cons_CRB_phi}} We can rewrite \eqref{eq::probOSA::cons_CRB_theta} and \eqref{eq::probOSA::cons_CRB_phi} equivalently as
\begin{align*}
	\mathtt{CRB}^{-1}_{\theta}(\bm{\Omega})
	\geq \frac{1}{\mathtt{CRB}_{\theta}^0}, \text{and }
		\mathtt{CRB}^{-1}_{\phi}(\bm{\Omega})
	\geq \frac{1}{\mathtt{CRB}_{\phi}^0}. \nbthis \label{eq_SOC}
\end{align*}
The LHS of constraints \eqref{eq_SOC} is derived as in \eqref{eq_CRB_theta_inv_lwb} and \eqref{eq_CRB_phi_inv_lwb}, respectively. The convex form of CRB constraints \eqref{eq::probOSA::cons_CRB_theta} and \eqref{eq::probOSA::cons_CRB_phi} are rewritten as
\begin{align*}
	\Big[\mathtt{CRB}^{-1}_{\theta}\Big]^{(i)}(\bm{\Omega})
	\geq \frac{1}{\mathtt{CRB}_{\theta}^0}, \text{and } \Big[\mathtt{CRB}^{-1}_{\phi}\Big]^{(i)}(\bm{\Omega})
	\geq \frac{1}{\mathtt{CRB}_{\phi}^0}. \nbthis \label{eq_SOC_approx}
\end{align*} 	

\subsubsection{Convexifying Power Constraint \eqref{eq::probOSA::cons_power_MRT}} Since The LHS of the power budget constraint \eqref{eq::probOSA::cons_power_MRT} is nonconvex; however, its convexity can be achieved by applying the upper bound approximation previously analyzed in \eqref{PTX}. Thus, we obtain:
\begin{align*}
	P_{\mathtt{TX}}^{(i)}(\bm{\Omega}) \leq P_{\mathtt{max}}. \nbthis \label{PTX_approx}
\end{align*}

\subsection{Proposed Iterative Algorithm}
Define by $\bm{\Theta} \triangleq \{\bm{\Theta}_{\mathcal{A}},\bm{\Theta}_{\mathcal{I}},\bm{\Theta}_{\mathcal{B}},\bm{\Theta}_{\mathcal{C}} \}$ the set of auxiliary variables used in the approximation process, where $\bm{\Theta}_{\mathcal{A}} = \{ \{\mathcal{A}_{i} \}, i=\{x,y,z\} \}$, $\bm{\Theta}_{\mathcal{I}} = \{ \{\mathcal{I}_{i}^{\reim} \}, \reim = \{\resym,\imsym\}, i=\{x,y,z\} \}$, $\bm{\Theta}_{\mathcal{B}} = \{ \{\mathcal{B}_{i} \}, i=\{x,y,z\} \}$, $\bm{\Theta}_{\mathcal{C}} = \{ \{\mathcal{C}_{k,i}[q] \}, \forall k,\forall q, i=\{x,y,z\} \}$. The approximate convex program solved at iteration $i$ for \eqref{ori_probOSA}  is
\begin{subequations}\label{ori_probOSA_approx}
	\begin{IEEEeqnarray} {rcl}
			&& \underset{\substack{ \bm{\Omega}, \bm{\Theta}}}{\textrm{maximize }}   \;  \Big\{ \sum_{k=1}^{K} \wtSEck^{(i)}(\bm{\Omega}) + \omega \Big( \Big[\mathtt{CRB}^{-1}_{\theta}\Big]^{(i)}(\bm{\Omega})  \nonumber \\
			&&\qquad\qquad\quad + \Big[\mathtt{CRB}^{-1}_{\phi}\Big]^{(i)}(\bm{\Omega}) \Big)  - \tau^{(i)} P^{(i)}(\bm{\Omega}) \Big\} \label{ori_probOSA_approx_obj} \\
			&&\mathtt{s.t.}\ 
			 \eqref{eq_total_power_0}, \eqref{eq_xi_power}, \eqref{crlb_soc_b}, \eqref{crlb_soc_a1_approx}, \eqref{crlb_soc_a2_approx}, \eqref{tTt_soc_b}, \eqref{tTt_soc_a1_upb}, \eqref{tTt_soc_a2_lwb}, \nonumber \\ 
 &&\qquad \eqref{ttp_ttt_soc_b}, \eqref{ttp_ttt_soc_a1_upb}, \eqref{ttp_ttt_soc_a2_lwb},
			 \eqref{PR_soc_b}, \eqref{PR_soc_a_approx}, \eqref{cons_comm_OSA_approx}, \eqref{eq_SOC_approx}, \eqref{PTX_approx}.
	\end{IEEEeqnarray}
\end{subequations}
At iteration ${i}$,  $\tau^{(i)}$ is updated as
\begin{align*}
	\!\!\! \tau^{(i)}\!\! = \!\!\frac{\sum_{k=1}^{K}\!\! \wtSEck^{(i)}(\bm{\Omega}) \!+\! \omega \Big( \Big[\mathtt{CRB}^{-1}_{\theta}\Big]^{(i)}\!\!\!(\bm{\Omega})\!\! +\!\! \Big[\mathtt{CRB}^{-1}_{\phi}\Big]^{(i)}\!\!\!(\bm{\Omega}) \Big)}{P^{(i)}(\bm{\Omega})}  \nbthis. \label{tau_approx}
\end{align*}

\subsubsection{Initialization} To find an initial feasible point for problem (\ref{ori_probOSA_approx}), we begin with any point that satisfies the convex constraints (\ref{eq_total_power_0}) and (\ref{eq_xi_power}), and then iterate
\begin{subequations}\label{ori_probOSA_initial}
	\begin{IEEEeqnarray} {rcl}
			&&\underset{\substack{ \bm{\Omega}, \bm{\Theta}}}{\mathrm{maximize}}  \quad \min \Bigg( \Bigg\{ \min_{k=1,\cdots, K} \Bigg[ \frac{\wtSEck^{(i)}(\bm{\Omega})}{\SEth} -1 \Bigg]\Bigg\},  \nonumber\\ 
			&&\qquad  \Bigg\{ \frac{\Big[\mathtt{CRB}^{-1}_{\theta}\Big]^{(i)}(\bm{\Omega})}{1/\mathtt{CRB}_{\theta}^0} -1\Bigg\},   \Bigg\{ \frac{\Big[\mathtt{CRB}^{-1}_{\phi}\Big]^{(i)}(\bm{\Omega})}{1/\mathtt{CRB}_{\phi}^0} -1\Bigg\} \Bigg) \label{ori_probOSA_initial_obj} \hspace{1cm} \\
			&&\mathtt{s.t.}\quad \eqref{eq_total_power_0}, \eqref{eq_xi_power}, \eqref{crlb_soc_b}, \eqref{crlb_soc_a1_approx}, \eqref{crlb_soc_a2_approx}, \eqref{tTt_soc_b},\nonumber \\
			& & \qquad\ \; \eqref{tTt_soc_a1_upb}, \eqref{tTt_soc_a2_lwb}, \eqref{ttp_ttt_soc_b},  \eqref{ttp_ttt_soc_a1_upb}, \eqref{ttp_ttt_soc_a2_lwb}, \eqref{PTX_approx}, 
	\end{IEEEeqnarray}
\end{subequations}
until the objective value in (\ref{ori_probOSA_initial_obj}) reaches or surpasses zero. 


\subsubsection{Complexity and Convergence Analysis} The computational complexity of Algorithm \ref{alg1} is ${\cal O}(n^2 m^{2.5} + m^{3.5})$, where $n = 6KQ + 12$ represents the number of scalar variables, comprising $3KQ$ from $\bm{\Omega}$ and $12 + 3KQ$ from $\bm{\Theta}$. The parameter $m = 15 + 7KQ + K$ denotes the number of corresponding constraints~\cite[p.4]{peaucelle2002user}.
	
	We denote $f(\bm{\Omega}) = \SEc + \omega( \mathtt{CRB}^{-1}_{\theta} (\bm{\Omega}) + \mathtt{CRB}^{-1}_{\phi} (\bm{\Omega}) )  - \tau P_{\mathtt{tot}} (\bm{\Omega})$ as the objective function defined in \eqref{eq_obj_dkb}. Given a feasible set $\bm{\Omega}^{(i)}$ for (\ref{ori_probOSA_approx}), our approximation method addresses the non-concavity of  $f(\bm{\Omega})$ by constructing a concave function $f^{(i)} (\bm{\Omega}) = \sum_{k=1}^{K} \wtSEck^{(i)} (\bm{\Omega}) + \omega \Big( \Big[\mathtt{CRB}^{-1}_{\theta}\Big]^{(i)} (\bm{\Omega}) 
	\Big[\mathtt{CRB}^{-1}_{\phi}\Big]^{(i)} (\bm{\Omega}) \Big)  - \tau^{(i)} P^{(i)} (\bm{\Omega})$. This concave approximation ensures that the objective function in (\ref{ori_probOSA_approx}) is lower bounded by
	\begin{align*}
		f(\bm{\Omega}) \leq f^{(i)}(\bm{\Omega}). \nbthis \label{proof_convex}
	\end{align*}
	It coincides with $f(\bm{\Omega})$ within the feasible region  $\bm{\Omega}^{(i)}$
	\begin{align*}
		f(\bm{\Omega}^{(i)}) = f^{(i)}(\bm{\Omega}^{(i)}). \nbthis \label{proof_convex1}
	\end{align*}
	The above analysis shows that 
	\begin{align*}
		f(\bm{\Omega}^{(i)})
		&\overset{(a)}{=} f^{(i)}(\bm{\Omega}^{(i)}) 
		&\overset{(b)}{\leq} f^{(i)} (\bm{\Omega}^{(i+1)}) 
		&\overset{(c)}{\leq} f^{(i+1)} (\bm{\Omega}^{(i+1)})
	\end{align*}
	for every $ \bm{\Omega}^{(i+1)} \neq  \bm{\Omega}^{(i)}$,
	where $(a)$ follows from \eqref{proof_convex1}, $(b)$ holds because $\bm{\Omega}^{(i+1)}$ and $\bm{\Omega}^{(i)}$ are
	the optimal solution and a feasible point for (\ref{ori_probOSA_approx}), respectively, and $(c)$ follows from \eqref{proof_convex}. Algorithm \ref{alg1} generates an improving sequence of feasible points $(\bm{\Omega}^{(i)})$ for problem (\ref{ori_probOSA_approx}). Following arguments similar to those in \cite{kha2011fast}, it can be demonstrated that Algorithm \ref{alg1} converges to at least a locally optimal solution of (\ref{ori_probOSA}) satisfying the KKT optimality conditions. The complete SCA-based solution procedure is outlined in Algorithm \ref{alg1}.

\begin{algorithm}[t]
	\begin{algorithmic}[1]
		\fontsize{10}{10}\selectfont
		\protect\caption{Proposed Iterative Algorithm for Solving  \eqref{ori_probOSA}}
		\label{alg1}
		\global\long\def\algorithmicrequire{\textbf{Initialization:}}
		\REQUIRE  Set $i:=1$ and randomly generate an initial point satisfying \eqref{eq_total_power_0} and \eqref{eq_xi_power}. Solve \eqref{ori_probOSA_initial} to search the feasible set $\bm{\Omega}^{(i)}$.
		\REPEAT
		\STATE Solve  \eqref{ori_probOSA_approx} to obtain the optimal variable set of $\bm{\Omega}^{\star}$;
		\STATE Update:\ \ $\bm{\Omega}^{(i)} := \bm{\Omega}^{\star}$;
		\STATE Update:\ \ $\tau^{(i)} $ according to \eqref{tau_approx};
		\STATE Set $i:=i+1$;
		\UNTIL Convergence\\
		\STATE{\textbf{Output:}} $\bm{\Omega}^{\star}$.
	\end{algorithmic} 
\end{algorithm}

\section{Numerical Results}
\label{sec_sim}
\subsection{Simulation Setup}
In this section,  we present and analyze numerical results to verify the effectiveness of the proposed method. We consider a scenario in which users' locations are randomly distributed according to a uniform distribution within a circular cell of radius $1000$ m. The BS is positioned at the center, $(0,0)$, with a minimum distance constraint of $r_{\mathtt{h}} = 100$ m from any user. The target is assumed to be located $400$ m from the BS at angles $(\theta,\phi) = \left(\frac{\pi}{8}, \frac{\pi}{4}\right)$.
The large-scale fading parameters are modeled as $\beta_k = z_k/(r_k/r_{\mathtt{h}})^{\nu}$, where $z_k$ follows a log-normal distribution with a standard deviation of $\sigma_{\mathtt{shadow}}$, and $r_k$ and $\nu$ represent the distance between UE $k$ and the BS, and the path loss exponent, respectively ~\cite{ngo2013energy}.
The system bandwidth, denoted by \text{BW}, is divided into multiple subcarriers, each with a bandwidth of $\text{BW}/Q$. We set $L=30$ and $\sigmac = \sigmas = \noise = 1$. For the array response vectors, $\ah[q]$ and $\av[q]$, we define the number of elements as $\Nth = \Ntv = \sqrt{\Nt} \in \mathbb{N}$ and $\Nrh = \Nrv = \sqrt{\Nr} \in \mathbb{N}$.
For simplicity, we consider $\mathtt{CRB}^0=\mathtt{CRB}_{\theta}^0 = \mathtt{CRB}_{\phi}^0$. Additional simulation parameters are provided in Table \ref{table:1}.
\begin{table}[t]
	\centering
	\caption{Parameter Settings}
	\begin{tabular}{|p{5.2cm}|p{2.5cm}|}
		\hline
		Parameter & Value \\
		\hline
            \hline
		Center frequency $f_c$ &  2GHz \\
        System bandwidth \text{BW}  &  10MHz  \\
		Number of communication UEs $K$ & 6 \\
		System circuit power $P_0$   &  5.6 mW  \\
		Power consumption coefficient $\epsilon$  &  -26dBm/bps \cite{xiong2011energy}   \\
		BS amplifier efficiency $\varrho$	& 0.35	\\
		Shadowing standard deviation & 7 dB \\
		Path loss exponent $\nu$ & 3.2 \\
		\hline
	\end{tabular}
	\label{table:1}
\end{table}

We employ CVX to solve the convex program \eqref{ori_probOSA_approx}. For the performance comparison, we consider the two benchmark approaches as follows:
\begin{itemize}
		\item \textbf{Equal power fractions among communication users (\textit{EqualCom})}: This scheme allocates power between communication and sensing while distributing it equally among all communication users. It is derived by solving problem \eqref{ori_probOSA_approx} under the constraint $\gamma_1[q] = \gamma_2[q] = \cdots = \gamma_K[q], \forall q$. 
		Compared to Algorithm 1, the \textit{EqualCom} scheme incurs lower computational complexity. This reduction stems from the fewer scalar variables involved in $\bm{\Omega}$, specifically $2KQ + Q$, which leads to $n=5KQ + Q + 12$ variables, and $m=15 + 6KQ + K + Q$ constraints \cite{peaucelle2002user}.
		\item \textbf{Equal power allocation between communication and sensing as well as among communication users (\textit{EqualC\&S})}:  In this setup, power is evenly divided between communication and sensing for each user on every subcarrier, ensuring $\gamma_k[q] = \eta_k[q], \forall k$. This approach adheres to the conditions in \eqref{eq::probOSA::cons_power_MRT} and \eqref{eq_total_power_0}, distributing the power budget uniformly for joint transmission.
\end{itemize}

For brevity, we denote the sum EE objective function in \ref{eq::probOSA::obj_func_OSA} as  ``Overall EE". Similarly, ``Com. EE" and ``Sen. EE" represent  $\mathtt{EE}_{\mathtt{c}}(\bm{\Omega})$ and $\omega \mathtt{EE}_{\mathtt{s}}(\bm{\Omega})$, respectively.

\subsection{Tradeoff between Communication and Sensing Performance }

Fig.~\ref{fig_omega} demonstrates the impact of parameter $\omega$ on balancing communication and sensing EE. Since $\mathtt{EE}_{\mathtt{s}}(\bm{\Omega})$ is in different scale compare to ``Overall EE", ``Com. EE" and ``Sen. EE", it is challenging to individually demonstrate the impact of $\omega$ on $\mathtt{EE}_{\mathtt{s}}(\bm{\Omega})$. For easier comparison, we normalize $\mathtt{EE}_{\mathtt{s}}(\bm{\Omega})$ to its maximum value, defining it as ``Nor. EEs'', which ranges within $[0, 1]$. As shown in Fig.~\ref{fig_omega}, increasing $\omega$ enhances the sensing EE while reducing communication EE, emphasizing sensing functionality. Since ``Nor. EEs'' is independent of $\omega$, the plot confirms that the trend in the overall EE, dominated by communication EE at low $\omega$ and by sensing EE at high $\omega$, also holds for sensing EE. As shown in Fig.~\ref{fig_omega}, a closer analysis of the low and high $\omega$ regimes is essential to better understand the trade-off between communication and sensing performance. Variations in $\omega$ can substantially shift the priority between these functions. To represent these regimes in the subsequent numerical analysis, we set $\omega$ to $10^{-4}$ for the low $\omega$ regime (LoR) and $2 \times 10^{-3}$ for the high $\omega$ regime (HoR).
\begin{figure}[t]%
	\centering
	\includegraphics[scale=0.55]{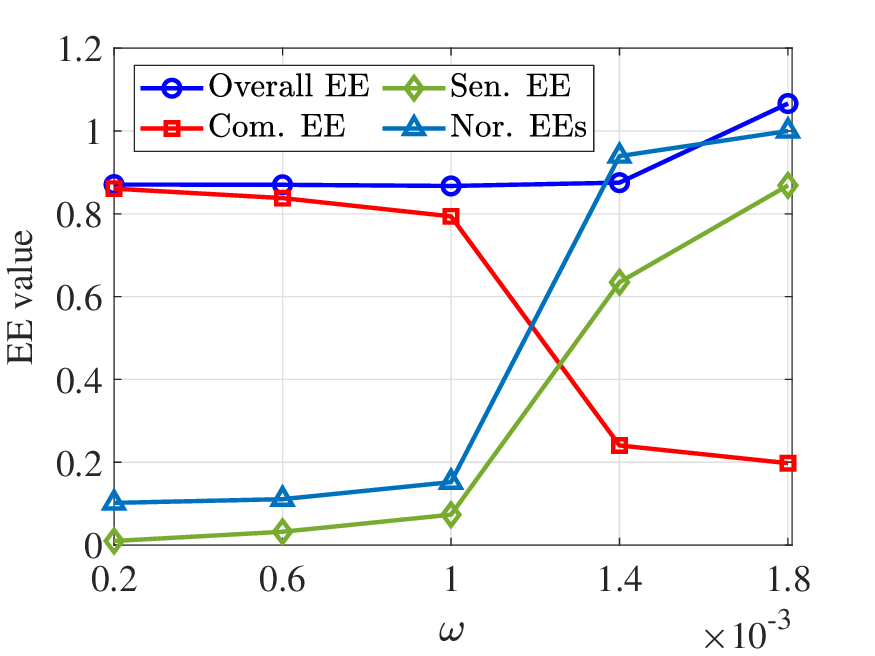}
	\caption[]{EE versus $\omega$ with $\Nt=25$, $\Nr=25$, $Q=16$, $\mathtt{CRB}^0 = -30$ dB, $\SEth = 5.0$ bps/Hz, and $P_{\mathtt{max}}=20$ dBm .}%
	\label{fig_omega}
\end{figure}

\subsection{Convergence of Algorithm \ref{alg1}}
\begin{figure}[t]%
	\centering
	\includegraphics[scale=0.55]{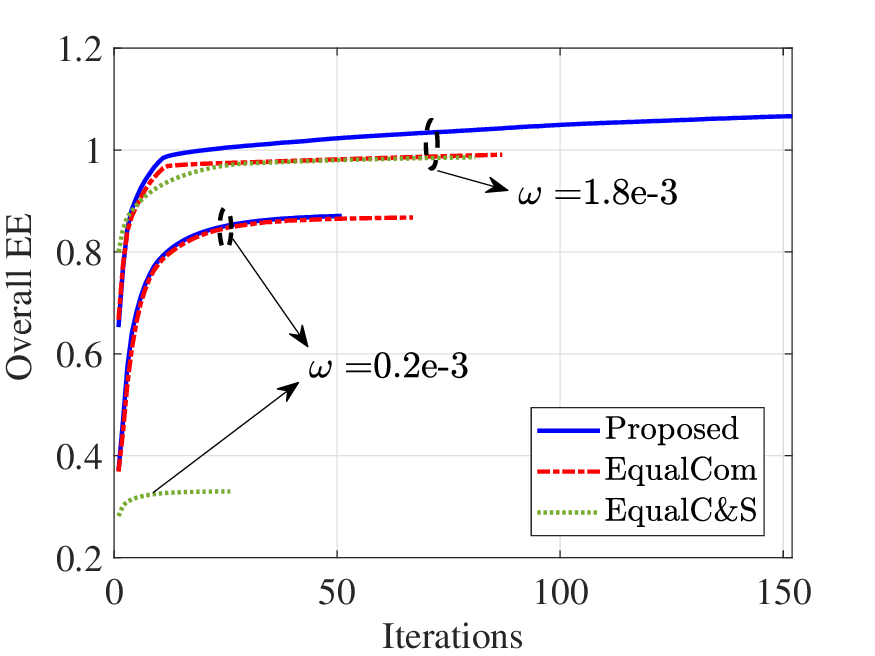}
	\caption[]{Convergence of Algorithm \ref{alg1} with $\Nt=25$, $\Nr=25$, $Q=16$, $\mathtt{CRB}^0 = -30$ dB, $\SEth=5$ bps/Hz, and $P_{\mathtt{max}}=20$ dBm.} 
	\label{fig_conv}
\end{figure}

To verify Algorithm \ref{alg1}'s convergence, we utilize parameters from  Fig. \ref{fig_omega} with $\omega = \{0.2 \times 10^{-3},1.8\times 10^{-3}\}$ and a convergence tolerance of $10^{-4}$. 
Fig.\ \ref{fig_conv} illustrates the convergence behavior of our proposed algorithm and two baselines, showing the evolution of overall EE over iterations. All schemes exhibit distinct convergence characteristics, consistently improving performance with each iteration until convergence is reached. 
At $\omega = 0.2 \times 10^{-3}$, all schemes converge within approximately 50 iterations. However, at higher $\omega$, convergence requires more iterations. Specifically, our proposed scheme takes 150 iterations to converge, while the baselines converge in around 80 iterations.
Each iteration takes approximately 45 seconds on a Core i5 machine (6 cores, 2.7 GHz), utilizing 25\% of the CPUs and 1.5 GB of RAM for running MATLAB with the CVX solver to find the optimal solution.
Overall EE of all schemes performs worse at $\omega = 0.2 \times 10^{-3}$ than at higher $\omega$, \textit{i.e.}, $1.8\times 10^{-3}$. Our proposed algorithm exhibits similar overall EE trends to \textit{EqualCom} at low $\omega$. However, at high $\omega$, it demonstrates a clear performance improvement.

\subsection{Communication and Sensing Performance }

\begin{figure}[t]%
	\vspace{-0.5cm}\centering
	\includegraphics[scale=0.55]{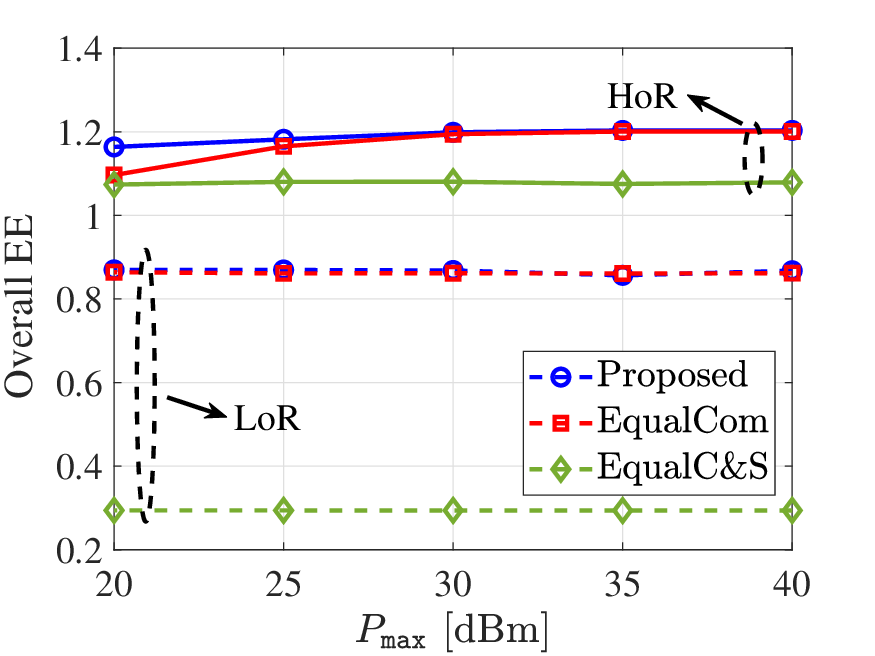}
	\caption[]{The overall EE versus $P_{\mathtt{max}}$ with $\Nt=25$, $\Nr=25$, $Q=16$, $\mathtt{CRB}^0=-30$ dB, and $\SEth=5.0$ bps/Hz.}%
	\label{fig_EE_Pt}
\end{figure}

Fig. \ref{fig_EE_Pt} plots the overall EE versus $P_{\mathtt{max}}$ for $\Nt=25$, $\Nr=25$, $Q=16$, $\mathtt{CRB}^0=-30$ dB, and $\SEth=5.0$ bps/Hz. Within the transmit power range of $20$ to $40$ dBm, all schemes exhibit consistent performance in the LoR. However, in the HoR, the proposed and \textit{EqualCom} schemes show noticeable performance improvements between 20 and 30 dBm, stabilizing thereafter.
Notably, the proposed algorithm significantly outperforms \textit{EqualCom} at low transmit power levels, particularly around 20 dBm. This improvement is primarily due to the influence of the sensing energy efficiency (Sen. EE) on the overall EE in the HoR. As $P_{\mathtt{max}}$ increases, the sensing function gains greater flexibility to allocate power for meeting system thresholds, thereby enhancing the Sen. EE. For the remaining cases, the \textit{EqualCom} scheme achieves comparable performance to the proposed scheme, but with lower computational complexity.

\begin{figure*}[t!]
	\centering
	\subfigure[Comparison to baselines with $\SEth=5.0$ bps/Hz and $P_{\mathtt{max}}=30$ dBm.]{%
		\label{fig_EE_CRLB}%
		\includegraphics[width=0.33\textwidth]{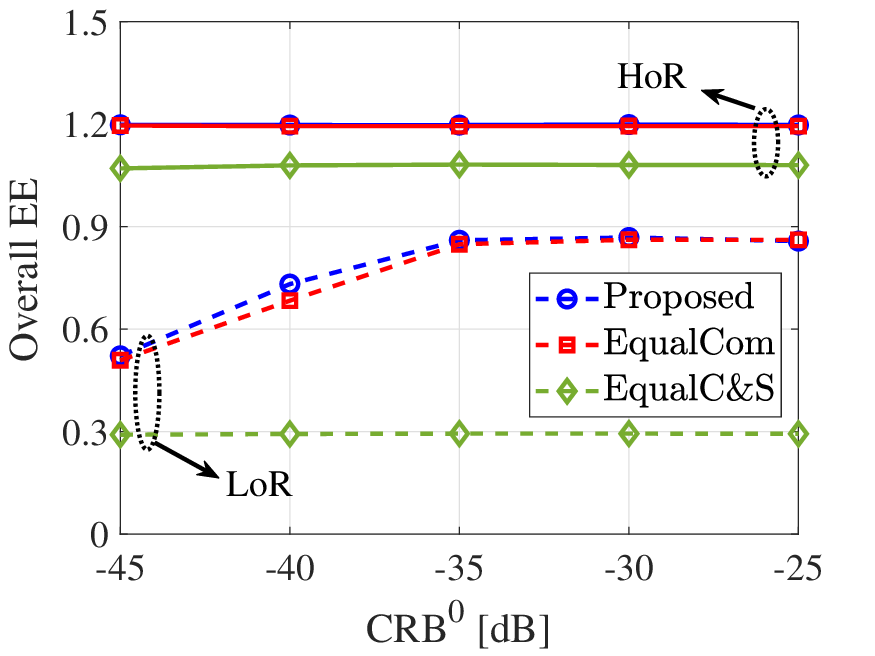}}%
	~
	\subfigure[Communication and sensing EE with $\SEth=5.0$ bps/Hz.]{%
		\label{fig_EE_CRLB_Pt2030}%
		\includegraphics[width=0.34\textwidth]{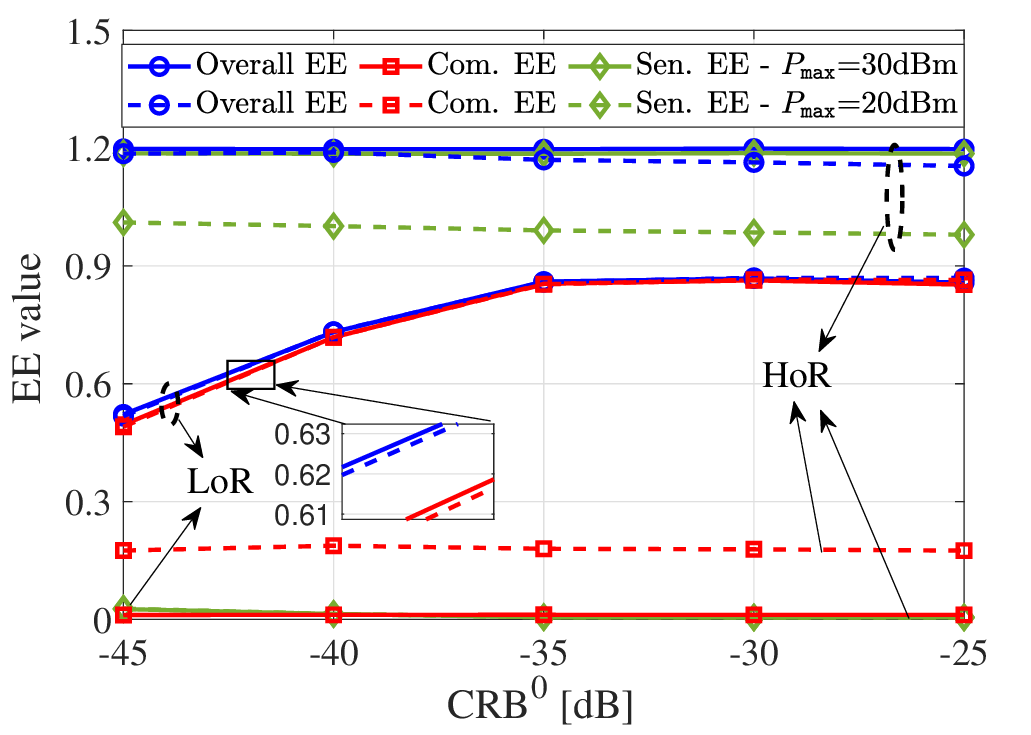}}%
	~
	\subfigure[Different $\SEth$ level plot with $P_{\mathtt{max}}=20$ dBm at LoR.]{%
		\label{fig_EE_CRB_lw}%
		\includegraphics[width=0.33\textwidth]{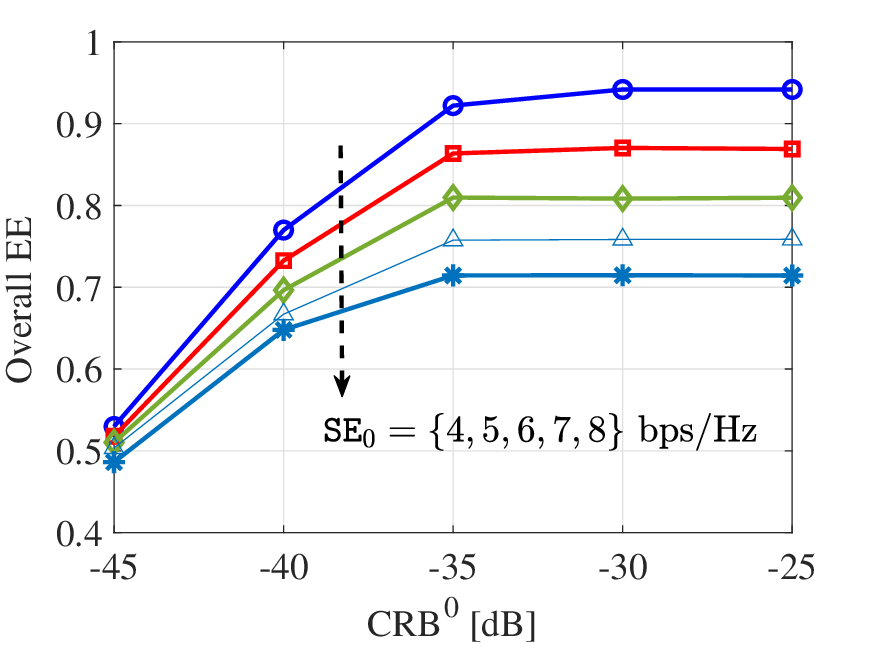}}%
	\caption{EE versus $\mathtt{CRB}^0$ with $\Nt=25$, $\Nr=25$, and $Q=16$.}
	\label{fig_CRLB}
\end{figure*}

Fig. \ref{fig_CRLB} shows the overall EE as a function of $\mathtt{CRB}^0=\mathtt{CRB}_{\theta}^0=\mathtt{CRB}_{\phi}^0$ for $\SEth=5.0$ bps/Hz and $P_{\mathtt{max}}=30$ dBm. As depicted in Fig. \ref{fig_EE_CRLB}, \textit{EqualCom} emerges as a strong competitor, whereas \textit{EqualC\&S} underperforms. The proposed scheme surpasses \textit{EqualCom} in the range $-45 \text{ dB} \leq \mathtt{CRB}^0 \leq -35 \text{ dB}$ and converges to its maximum at higher $\mathtt{CRB}^0$ values. Although higher $\mathtt{CRB}^0$ demands greater power consumption, the proposed scheme benefits from enhanced rate adaptation flexibility, improving the communication EE (EEc). Conversely, the uniform power allocation in \textit{EqualCom} restricts spectral efficiency (SE), imposing stricter EEc requirements in the LoR. Fig. \ref{fig_EE_CRLB_Pt2030} further illustrates that increasing $\mathtt{CRB}^0$ raises both communication EE and overall EE in the LoR. However, in the HoR, overall EE remains largely unaffected for $P_{\mathtt{max}}=30$ dBm, while a slight decrease is observed at $P_{\mathtt{max}}=20$ dBm. Lower $\mathtt{CRB}^0$ leads to reduced transmit power requirements, enhancing communication EE and overall EE in the LoR. Conversely, higher $\mathtt{CRB}^0$ degrades sensing EE in the HoR, as theoretically supported by Eq. \eqref{eq_EE}. Fig. \ref{fig_EE_CRB_lw} illustrates the effect of the communication threshold $\SEth$ on overall EE for $P_{\mathtt{max}}=20$ dBm under the LoR condition. In line with observations from Figs. \ref{fig_EE_CRLB} and \ref{fig_EE_CRLB_Pt2030}, overall EE improves as $\mathtt{CRB}^0$ becomes less stringent. However, increasing $\SEth$ consistently reduces EE for all $\mathtt{CRB}^0$ configurations, primarily due to faster power consumption growth compared to SE gains.

\begin{figure*}[t!]
	\centering
	\subfigure[Overall EE versus $\SEth$.]{%
		\label{fig_EE_R0}%
		\includegraphics[width=0.45\textwidth]{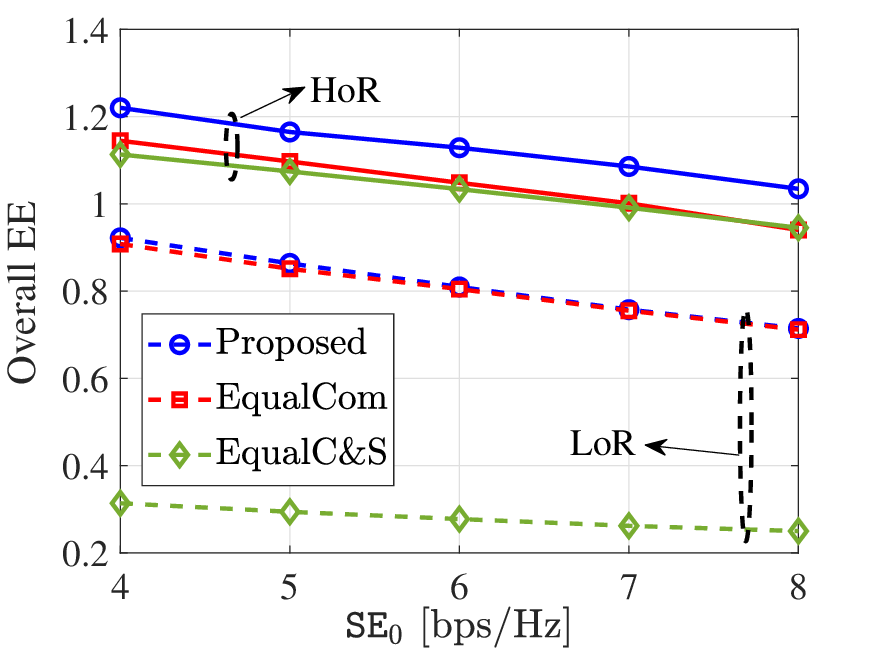}}%
	~
	\subfigure[SE versus $\SEth$.]{%
		\label{fig_SE_R0}%
		\includegraphics[width=0.45\textwidth]{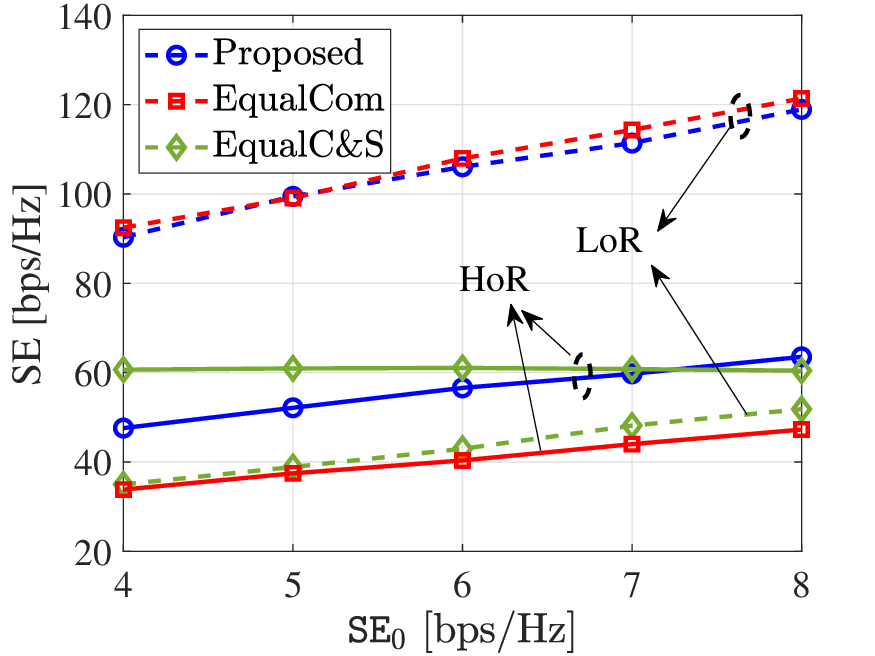}}%
	\\
	\subfigure[Total power consumption versus $\SEth$.]{%
		\label{fig_Ptot_R0}%
		\includegraphics[width=0.45\textwidth]{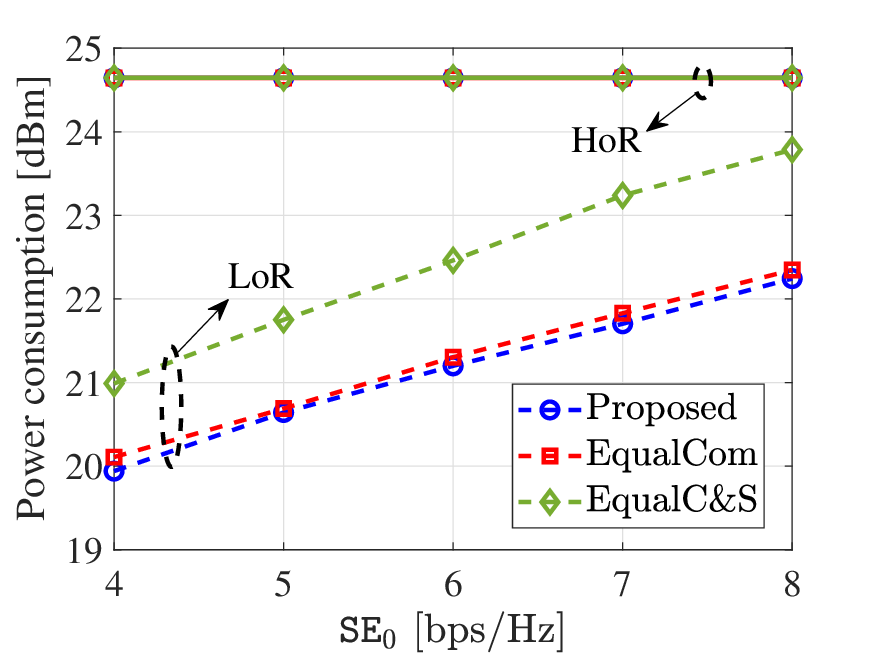}}%
	~
	\subfigure[The converged CRB versus $\SEth$.]{%
		\label{fig_convergedCRLB_R0}%
		\includegraphics[width=0.45\textwidth]{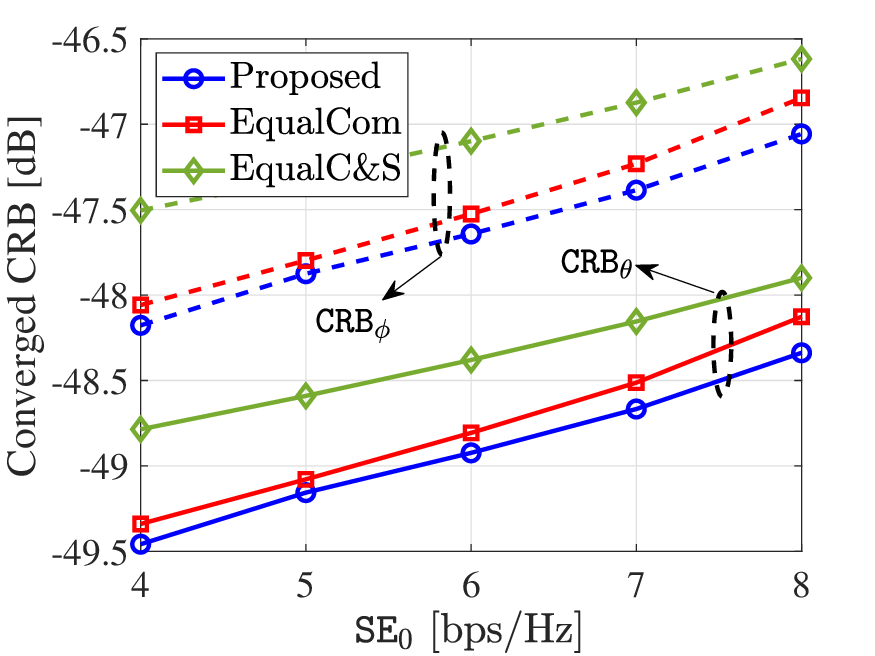}}%
	\caption{The effect of varying $\SEth$ with $\Nt=25$, $\Nr=25$, $Q=16$, $\mathtt{CRB}^0=-35$ dB, and $P_{\mathtt{max}}=20$ dBm.}
	\label{fig_R0}
\end{figure*}

Fig. \ref{fig_R0} evaluates system performance under different thresholds $\SEth$ for $\Nt=25$, $\Nr=25$, $Q=16$, $\mathtt{CRB}^0=-35$ dB, and $P_{\mathtt{max}}=20$ dBm. As illustrated in Fig. \ref{fig_EE_R0}, increasing $\SEth$ consistently degrades the overall EE across all schemes in both the LoR and HoR scenarios. In the LoR, higher communication demands result in increased SE, as shown in Fig. \ref{fig_SE_R0}. However, Fig. \ref{fig_Ptot_R0} reveals that the corresponding rise in power consumption exceeds the SE gains, ultimately reducing system EE.
While the overall EE in the LoR is primarily communication-driven, the HoR is predominantly sensing-dominated. Nevertheless, our simulations reveal a surprising sensitivity of sensing EE to the communication threshold. The joint signal design in \eqref{eq_fkq_bf} imposes a trade-off between communication and sensing power: increasing communication power to meet higher rate thresholds reduces the power available for sensing.
As shown in Fig. \ref{fig_convergedCRLB_R0}, this stringent condition elevates the converged CRBs defined in \eqref{eq_CRB_theta_result} and \eqref{eq_CRB_phi_result}, thereby degrading sensing EE. This decline is clearly evident in Fig. \ref{fig_EE_R0} for the HoR scenario.

\begin{figure}[t]%
	\vspace{-0.5cm}\centering
	\includegraphics[scale=0.55]{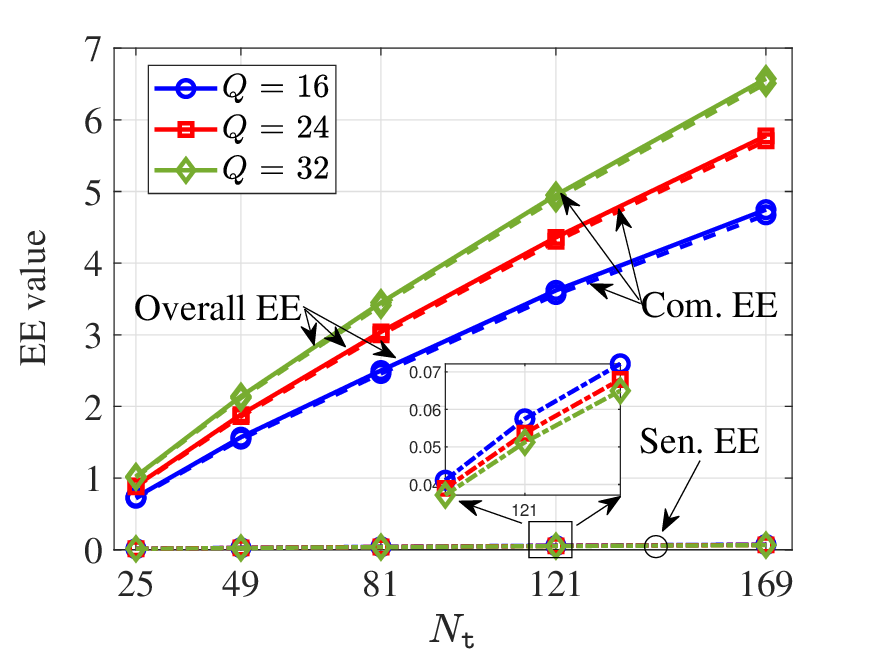}
	\caption[]{EE versus $\Nt$ with $\Nr=25$, $\mathtt{CRB}^0 = -40$ dB, $\SEth = 5.0$ bps/Hz, $P_{\mathtt{max}}=20$ dBm, and $\omega=$1e-4.}%
	\label{fig_Nt}
\end{figure}
\subsection{Effect of $\Nt$ and $Q$}
To evaluate the impact of antenna size $\Nt$ and the number of OFDM channels $Q$, simulations are conducted with $\Nr=25$, $\mathtt{CRB}^0 = -40$ dB, $\SEth = 5.0$ bps/Hz, $P_{\mathtt{max}}=20$ dBm, and $\omega=10^{-4}$. Fig. \ref{fig_Nt} demonstrates that increasing either $\Nt$ or $Q$ enhances system performance. The communication EE (depicted by dashed lines) shows significant improvements, whereas the sensing EE (depicted by dash-dot lines) experiences only marginal gains. The increase in $\Nt$ generally boosts overall EE by leveraging hardware diversity. Additionally, sampling OFDM systems at higher resolutions can substantially enhance EE. However, scaling up $\Nt$ and $Q$ introduces hardware design challenges, particularly in terms of cost and system complexity.

\section{Conclusion}
\label{sec_conclusion}
This work explored the energy efficiency optimization of mono-static mMIMO-OFDM ISAC systems, addressing trade-offs between communication and sensing performance. We formulated a power allocation problem to maximize system EE under communication rate and sensing accuracy (CRB) constraints. To solve this problem, we proposed a solution framework combining problem transformation, CRB analysis, and optimization methods like Dinkelbach's approach and SCA, along with an efficient initialization strategy to improve convergence.
Simulations demonstrated significant EE improvements over baseline methods and revealed that stringent communication thresholds degrade sensing EE, regardless of system priority. These findings highlight the need for adaptive resource allocation strategies in practical ISAC deployments. 
Future work could explore hardware impairments and estimation errors to address the impact of these real-world non-idealities to ISAC systems. In addition, the proposed framework can be extended to incorporate joint CRLB constraints for Doppler and range estimation to support more comprehensive tracking scenarios, possibly requiring joint waveform and power optimization.

\appendices

\renewcommand{\thesectiondis}[2]{\Alph{section}:}
\section{Proof of Lemma \ref{lm_ZF_precoder}} \label{appd_proof_ZF}
\renewcommand{\theequation}{\ref{appd_proof_ZF}.\arabic{equation}}\setcounter{equation}{0}
First, we can express $\mH[q]$ as $\mH[q] = \mZ[q] \mD_{\beta}$, where $\mZ[q] \triangleq [\vz_1[q], \ldots, \vz_K[q]]$, $\vz_k[q] \sim \mathcal{CN}(\bm{0}, \mI_{\Nt})$, and $\mD_{\beta} = \diag{\beta_1, \ldots, \beta_K}$. From \eqref{eq_W_linear}, it follows that
\begin{align*}
	&\mean{\tr{\mW[q] \mW^\H[q]}}\\ 
	&= {\alpha_{\mathtt{ZF}}^2[q]} \mean{\tr{\mD_{\beta}^{-1} (\mZ^\H[q] \mZ[q])^{-1}}} \\
	&= {\alpha_{\mathtt{ZF}}^2[q]} \sum\nolimits_{k=1}^{K} \frac{1}{\beta_k} \mean{\left[ (\mZ^\H[q] \mZ[q])^{-1} \right]_{kk}}\\
	& = {\alpha_{\mathtt{ZF}}^2[q]} \sum\nolimits_{k=1}^{K}  \frac{1}{K\beta_k} \mean{\tr{(\mZ^\H[q] \mZ[q])^{-1}}}\\
	&= {\alpha_{\mathtt{ZF}}^2[q]} \sum_{k=1}^{K} \frac{1}{(\Nt-K)\beta_k} =  \frac{{\alpha_{\mathtt{ZF}}^2[q]}}{\Nt-K} \tr{\mD_{\beta}^{-1}}. \nbthis \label{eq_e_tot_ZF_01}
\end{align*}
Herein we use the property $\meanshort{\trshort{(\mZ^\H[q] \mZ[q])^{-1}}} = \frac{K}{\Nt-K}$ as $\mZ^\H[q] \mZ[q]$ is a central complex Wishart matrix~\cite{yang2013performance}. Setting $\mean{\tr{\mW[q] \mW^\H[q]}} = K, \forall q$, we obtain ${\alpha_{\mathtt{ZF}}[q]}$ as in Lemma \ref{lm_ZF_precoder}.

\renewcommand{\thesectiondis}[2]{\Alph{section}:}
\section{Proof of Lemma \ref{lemma_power}} \label{appd_proof_power}
\renewcommand{\theequation}{\ref{appd_proof_power}.\arabic{equation}}\setcounter{equation}{0}
From $P_{\mathtt{TX}}(\bm{\Omega}) = \mean{\sum_{q=1}^Q \tr{\mX[q] \mX[q]^\H}}$ and using \eqref{eq_mX}, $P_{\mathtt{TX}}(\bm{\Omega})$ can be expanded as:
\begin{align*}
	P_{\mathtt{TX}}(\bm{\Omega}) = \sum\nolimits_{q=1}^Q {\alpha_{\mathtt{ZF}}^2[q]} \varepsilon_{\mathtt{c}}[q] + {\frac{1}{\Nt}} \varepsilon_{\mathtt{s}}[q], \nbthis \label{eq_mean_001}
\end{align*}
where $\varepsilon_{\mathtt{c}}[q] \triangleq \mean{\tr{\mD_{\gamma}[q] {\bm{\mD}_{\xi}^2[q]} \mD_{\gamma}^\H[q] (\mH[q]^\H \mH[q])^{-1}}}$ and $\varepsilon_{\mathtt{s}}[q] \triangleq \mean{\tr{\bar{\bm{\eta}}^\T[q] {\bm{\mD}_{\xi}^2[q]}  \bar{\bm{\eta}}[q] \norm{\va[q]}^2}}$. By \eqref{eq_e_tot_ZF_01}, we obtain
\begin{align*}
	\varepsilon_{\mathtt{c}}[q]
	&\triangleq \sum\nolimits_{k=1}^{K} \frac{\xi_k[q] \gamma_k[q]}{(\Nt-K)\beta_k} =  \frac{\bm{\xi}^\T[q] \mD_{\beta}^{-1} \bm{\gamma}[q]}{\Nt-K}, \nbthis \label{eq_eps_c} \\
	\varepsilon_{\mathtt{s}}[q]
	&=\triangleq \Nt \sum\nolimits_{k=1}^K \xi_k[q] \eta_k[q] = \Nt \bm{\xi}^\T[q] \bm{\eta}[q], \nbthis \label{eq_eps_s}
\end{align*}
where $\bm{\xi}[q] \triangleq [\xi_1[q], \ldots, \xi_K[q]]^\T$, $\bm{\gamma}[q] \triangleq [\gamma_1[q], \ldots, \gamma_K[q]]^\T$, and $\bm{\eta}[q] \triangleq [\eta_1[q], \ldots, \eta_K[q]]^\T$. We omit the detailed derivations for \eqref{eq_eps_c} and \eqref{eq_eps_s} due to space limitations. 
Finally, \eqref{eq_Ptx} follows \eqref{eq_mean_001}--\eqref{eq_eps_s}.

\renewcommand{\thesectiondis}[2]{\Alph{section}:}
\section{Proof of Theorem \ref{theo_CRB}} \label{appd_proof_CRB}
\renewcommand{\theequation}{\ref{appd_proof_CRB}.\arabic{equation}}\setcounter{equation}{0}
The equivalent Fisher Information Matrix (FIM) for $(\theta, \phi)$ over all subcarriers is given as \cite{wu2011antenna} 
\begin{align*}
	\!\mT_{\mathtt{FIM}} \!&=  \!\sum_{q=1}^Q
	\begin{bmatrix}
		\Ttt[q]  & \Ttp[q] \\
		\Ttp[q] \!& \!\Tpp[q] \!-\! \Tpa[q] \Taa^{-1}[q] \Tpa^\T[q]
	\end{bmatrix}. \nbthis \label{eq_FIM}
\end{align*}
In \eqref{eq_FIM}, the elements of $\mT_{\mathtt{FIM}}[q]$ are given by \cite{bekkerman2006target}
\begin{align*}
	\Ttt[q] &= \kappa \abs{\alpha}^2 \tr{\dot{\mG}_{\theta}[q] \mR_x[q] \dot{\mG}^\H_{\theta}[q]}, \nbthis \label{eq_def_Jtt} \\
	\Ttp[q] &= \kappa \abs{\alpha}^2 \tr{\dot{\mG}_{\phi}[q] \mR_x[q] \dot{\mG}^\H_{\theta}[q]}, \nbthis \label{eq_def_Jtp}\\
	\Tta[q] &= \kappa \re{\alpha^* \tr{\mG[q] \mR_x[q] \dot{\mG}^\H_{\theta}[q]}[1,j]}, \nbthis \label{eq_def_Jta}\\
	\Tpp[q] &= \kappa \abs{\alpha}^2 \tr{\dot{\mG}_{\phi}[q] \mR_x[q] \dot{\mG}^\H_{\phi}[q]}, \nbthis \label{eq_def_Jpp} \\
	\Tpa[q] &= \kappa \re{\alpha^* \tr{\mG[q] \mR_x[q] \dot{\mG}^\H_{\phi}[q]}[1,j]}, \nbthis \label{eq_def_Jpa}\\
	\Taa[q] &= \kappa \tr{\mG[q] \mR_x[q] \mG^\H[q]} \mI_2, \nbthis \label{eq_def_Jaa}
\end{align*}
where $\kappa \triangleq \frac{2L}{\sigmas}$, and $\mR_x[q] = \frac{1}{L} \mean{\mX[q] \mX[q]^\H}$. After some algebraic manipulation, we obtain
\begin{align*}
	\mR_x[q] &= \frac{1}{\Nt} \left( \bm{\xi}^\T[q] \bar{\mD}_{\beta} \bm{\gamma}[q] \mI_{\Nt} + \va[q] \bm{\xi}^\T[q] \bm{\eta}[q] \va^\H[q] \right). \nbthis \label{eq_approx_cov}
\end{align*}
Substituting \eqref{eq_approx_cov} into \eqref{eq_def_Jtt}--\eqref{eq_def_Jaa}, we obtain \eqref{eq_result_Ttt}--\eqref{eq_result_Taa}. As a result, the CRB associated with \(\theta\) and \(\phi\) are derived as $\mathtt{CRB}_{\theta}(\bm{\Omega}) = \left[\mT_{\mathtt{FIM}}^{-1}\right]_{11}$ and $\mathtt{CRB}_{\phi}(\bm{\Omega}) = \left[\mT_{\mathtt{FIM}}^{-1}\right]_{22}$, where $[\mA]_{ij}$ denotes the $(i,j)$-th entry of matrix $\mA$. Using \eqref{eq_FIM}--\eqref{eq_approx_cov}, we obtain the results in Theorem \ref{theo_CRB}.

	\renewcommand{\thesectiondis}[2]{\Alph{section}:}
		\section{Preliminary Inequality} \label{Preliminary_Inequality}
		\renewcommand{\theequation}{\ref{Preliminary_Inequality}.\arabic{equation}}\setcounter{equation}{0}
			Consider the following products
			\begin{align*}
				f_{p^2}(x,y) \triangleq xy, \text{and } f_{p^3}(t,x,y) \triangleq t\sqrt{xy}, (t,x,y) \in \setR_{++}^3.
			\end{align*}
			Global bounds for $f_{p^2}(x,y)$ are found by
			\begin{align*}
				f_{p^2}(x,y) &\!\geq\! \Ldp{x}{y} \!\triangleq\! (\bar{x}\!+\!\bar{y})(x\!+\!y) \!-\! 0.5 \!\Big[(\bar{x}\!+\!\bar{y})^2 \!+\!  x^2 \!+\! y^2\Big], \nbthis \label{inequality_xy_geq} \\
				f_{p^2}(x,y) &\!\leq\! \Udp{x}{y}  \!\triangleq\! \frac{0.5\bar{y}}{\bar{x}} x^2 + \frac{0.5\bar{x}}{\bar{y}} y^2, \nbthis \label{inequality_xy_leq}
			\end{align*}
			and those for $f_{p^3}(t,x,y)$ by
			\begin{align*}
				f_{p^3}(t,x,y) 
				&\!\geq\! t \Ldp{\sqrt{x}}{\sqrt{y}} \!\geq\! \Ltp{t}{\sqrt{x}}{\sqrt{y}}\\
				&\!\triangleq\! (\sqrt{\bar{x}}\!+\!\sqrt{\bar{y}})(\Ldp{t}{\sqrt{x}}\!+\!\Ldp{t}{\sqrt{y}}) \\
				& \!-\! 0.5 \!\Big[(\sqrt{\bar{x}}\!+\!\sqrt{\bar{y}})^2 t \!+\!  \Udp{t}{x} \!+\! \Udp{t}{y}\Big], \nbthis \label{inequality_txy_geq} \\
				f_{p^3}(t,x,y) 
				&\!\leq\! t \Udp{\sqrt{x}}{\sqrt{y}} \!\leq\! \Utp{t}{\sqrt{x}}{\sqrt{y}}\\
				&\!\triangleq\! 0.5\sqrt{\bar{y}/\bar{x}} \Udp{t}{x} + 0.5 \sqrt{\bar{x}/\bar{y}} \Udp{t}{y}, \nbthis \label{inequality_txy_leq}
			\end{align*}
			where $\Ldp{x}{y}$ and $\Ldp{\sqrt{x}}{\sqrt{y}}$ are concave lower bounds of $f_{p^2}$ and $f_{p^3}(t,x,y)$, respectively, and $\Udp{x}{y}$ and $\Udp{\sqrt{x}}{\sqrt{y}}$ are their corresponding convex upper bounds. These preliminaries are applicable for any sets of feasible points that satisfy $(\bar{t},\bar{x},\bar{y}) \in \setR_{++}^3$.
		
\begingroup
\bibliographystyle{IEEEtran}
\bibliography{IEEEabrv,Bibliography}
\endgroup

\end{document}